\newtheorem{theorem}{Theorem}
\newtheorem{lemma}{Lemma}
\newtheorem{claim}{Claim}
 \gdef\xxxmark{%
   \expandafter\ifx\csname @mpargs\endcsname\relax % in minipage?
     \expandafter\ifx\csname @captype\endcsname\relax % in figure/caption?
       \marginpar{xxx}% not in a caption or minipage, can use marginpar
     \else
       xxx % notice trailing space
     \fi
   \else
     xxx % notice trailing space
   \fi}
 \gdef\xxx{\@ifnextchar[\xxx@lab\xxx@nolab}
 \long\gdef\xxx@lab[#1]#2{{\bf [\xxxmark #2 ---{\sc #1}]}}
 \long\gdef\xxx@nolab#1{{\bf [\xxxmark #1]}}
\newcommand{\poly}{\mbox{poly}}
\def \eps {\varepsilon}
\newcommand{\removelatexerror}{\let\@latex@error\@gobble}
\begin{document}

\def \isnotin {\nsubseteq}

\def \eps {\varepsilon}

\title{\Large Subtree Isomorphism Revisited \thanks{A.A. and V.V.W. were supported by NSF Grants CCF-1417238 and CCF-1514339, and BSF Grant BSF:2012338. 
A.B. was supported by the NSF and the Simons Foundation; part of the work was done while the author was at the Thomas J. Watson Research Center.
T.D.H. was supported by the Carlsberg Foundation, grant no. CF14-0617. 
O.Z. was supported by BSF grant no.\ 2012338 and by
The Israeli Centers of Research Excellence (I-CORE) program (Center
No.\ 4/11).
}}
\author{Amir Abboud \\ Stanford University \\ \texttt{ \small abboud@cs.stanford.edu}
\and Arturs Backurs \\ MIT \\ \texttt{ \small backurs@mit.edu}
\and Thomas Dueholm Hansen \\ Aarhus University \\ \texttt{ \small tdh@cs.au.dk}
\and Virginia {Vassilevska Williams} \\ Stanford University \\ \texttt{ \small virgi@cs.stanford.edu}
 \and Or Zamir \\ Tel Aviv University \\ \texttt{\small orzamir@mail.tau.ac.il}
 }
\date{}
\maketitle

\begin{abstract} %\small\baselineskip=9pt
The \emph{Subtree Isomorphism} problem asks whether a given tree is contained in another given tree.
The problem is of fundamental importance and has been studied since the 1960s.
For some variants, e.g., \emph{ordered trees}, near-linear time algorithms are known, but for the general case truly subquadratic algorithms remain elusive.

Our first result is a reduction from the Orthogonal Vectors problem to Subtree Isomorphism, showing that a truly subquadratic algorithm for the latter refutes the Strong Exponential Time Hypothesis (SETH).

In light of this conditional lower bound, we focus on natural special cases for which no truly subquadratic algorithms are known. We classify these cases against the quadratic barrier, showing in particular that:

\begin{itemize}
\item
Even for binary, rooted trees, a truly subquadratic algorithm refutes SETH.
\item Even for rooted trees of depth $O(\log\log{n})$, where $n$ is the total number of vertices, a truly subquadratic algorithm refutes SETH.
\item
For every constant $d$, there is a constant $\eps_d>0$ and a randomized, truly subquadratic algorithm for degree-$d$ rooted trees of depth at most $(1+ \eps_d) \log_{d}{n}$.
In particular, there is an $O(\min\{ 2.85^h ,n^2 \})$ algorithm for binary trees of depth $h$.
\end{itemize}

Our reductions utilize new ``tree gadgets" that are likely useful for future SETH-based lower bounds for problems on trees. Our upper bounds apply a folklore result from randomized decision tree complexity.

\end{abstract}

\section{Introduction}

%In this paper we consider the \emph{Subtree Isomorphism} problem: Given two trees $H$ and $G$, decide whether $H$ is contained in $G$. 
%%Decide if a given tree $H$ is contained in another given tree $G$.
%The problem is also known as \emph{Tree Pattern Matching} and \emph{Subgraph Isomorphism on Trees}.
%We say that $H$ is contained in $G$ if $H$ is isomorphic to a subgraph of $G$, or in other words, if $H$ can be obtained from $G$ by deleting nodes and edges.
%We focus in particular on rooted and unordered trees, i.e. (parent, child) relations are defined but there is no order among the children of a node, and we assume that the degrees are constant (e.g. binary trees).
%This is essentially the simplest case of the problem that is not known to be solvable in near-linear time.

%The first one, which was our original motivation, is the obvious one:
% due to the problem being one of the most basic questions one can ask about one of our favorite objects as computer scientists. 
%Trees are among the most frequently used and commonly studied objects in computer science. Their popularity cannot be overstated and it is our duty as theoreticians to understand the time complexity of checking whether one tree is contained in another.
Trees are among the most frequently used and commonly studied objects in computer science. 
%Their popularity cannot be overstated and u
One of the most basic and fundamental computational problems on trees is whether one tree is contained in another, that is, can an isomorphic copy of $H$ be obtained by deleting nodes and edges of $G$. 
This problem is known under three names: \emph{Subtree Isomorphism}, \emph{Tree Pattern Matching} and \emph{Subgraph Isomorphism on Trees}.
There are a few variants of the problem, mainly determined by (1) whether the trees are rooted or unrooted, (2) whether their degrees are bounded, and (3) whether the trees are ordered, i.e. whether the order of the children of each node must be preserved by the isomorphism. In this paper we focus on the case of rooted, unordered trees with degrees bounded by a constant $d$.

Because of its fundamental importance, the time complexity of Subtree Isomorphism
has been studied since the 1960s, e.g. by Matula~\cite{Matula1968} and Edmonds (see \cite{Matula1978}).
The problem is an interesting special case of the Subgraph Isomorphism problem, studied extensively in theoretical computer science. 
Subgraph Isomorphism is well known to be NP-hard since it generalizes hard problems such as Clique \cite{Karp72}. 
It is notoriously difficult: unlike most natural NP-complete problems, it {\em requires} $2^{\omega(n)}$ time (under the exponential time hypothesis (ETH))~\cite{CPS15}.
Special cases of subgraph isomorphism, especially ones that are in P, have received extensive attention. 
A recent 85-page paper by Marx and Pilipczuk \cite{MP14} covers the case in which $H$ is of fixed constant size.
Besides fixing the size of $H$, there are other non-trivial ways to make the problem polynomial time solvable; Subtree Isomorphism is the earliest and arguably the most natural one.
Polynomial time algorithms were also obtained for biconnected outerplanar graphs \cite{Lingas89}, two-connected series-parallel graphs \cite{LP}, and more \cite{MT92,DLP00}, while it is known that further generalizations quickly become NP-hard, e.g., when $G$ is a forest and $H$ is a binary tree \cite{GJ}.

The problem is also of practical relevance, since it can model important applications in a wide variety of areas.
Subtree Isomorphism is at the core of many more expressive problems, such as \emph{Largest Common Subtree} \cite{KMY95,AHM00,ATMA14}, which generally ask: how ``similar" are two trees?
Application areas include computational biology \cite{ZS89}, structured text databases \cite{KM95}, and compiler optimization \cite{Tai79}.
Several definitions of tree-similarity have been proposed, and the search for fast algorithms for computing them, both in theory and in practice, has been ongoing for a few decades - see \cite{Bille05,Gallagher06,gusfield, Valiente13book} for surveys and textbooks.
We focus on Subtree Isomorphism, and then briefly discuss how the techniques introduced in this paper can be adapted to prove new results for the Largest Common Subtree problem as well.

\paragraph{Previous results.}
According to Matula \cite{Matula1978}, the first algorithms for Subtree Isomorphism were proposed in 1968 independently by Edmonds and Matula himself \cite{Matula1968}.
10 years later, Reyner \cite{Reyner77} and Matula \cite{Matula1978} showed that these algorithms run in polynomial time and the runtime is $O(n^{2.5})$. 
The algorithm executes many calls to a subroutine that solves maximum matching in bipartite graphs.
% on trees of size $O(n)$ due to a subroutine that solves maximum matching on a bipartite graph on $O(d)$ nodes, when the trees have degree at most $d$.
%Reyner \cite{Reyner77} also showed that if bipartite matching on $n$ node graphs can be solved in $M(n)$ time, then Subtree Isomorphism can also be solved in $O(M(n))$ time.
These result were for rooted trees, and later Chung \cite{Chung1978} showed that the same bounds can be achieved for unrooted trees.
In 1983, Lingas \cite{Lingas83} shaved a log factor, and
the most recent development was in 1999 by Shamir and Tsur \cite{ST99} who used the more recent randomized algorithms for bipartite matching \cite{Che97} to reduce the runtime to $O(n^{\omega})$ where $\omega < 2.373$ is the matrix multiplication exponent \cite{v12,legallmult}.

Interestingly, in the most basic case of rooted and constant degree trees, even the early algorithms run in $O(n^2)$ time, and the fastest known runtime is $O(n^2/\log{n})$ \cite{Lingas83, ST99}.
% A: we need to verify that we can shave a log for bounded degree graphs.
For comparison, when the trees are \emph{ordered},
%, meaning that there is a fixed underlying order among the children of a node that cannot be modified in the isomorphism, 
a long line of STOC/FOCS papers \cite{Kosaraju89,DGM94,CH97,Indyk97,Indyk98,CH02} brought down the complexity of the problem from quadratic \cite{HO82} to $O(n\log{n})$ time \cite{CH03}.
It is natural to wonder whether the same improvements can be achieved in the case of unordered trees.

%We are now ready to present our main results. More related works are discussed at the end of this section.

\paragraph{Main results.} 
Our main result is a conditional lower bound for Subtree Isomorphism.
We show that a truly subquadratic algorithm is unlikely, even on very restricted cases such as those of binary, rooted trees or rooted trees of depth $O(\log\log{n})$.
A matching upper bound, up to $n^{o(1)}$ factors, has been known since the 1960s (we briefly discuss this algorithm in Section~\ref{sec:UB}).

%We obtain new positive and negative results for Subtree Isomorphism.

%The negative results are proved via the trendy ``hardness in P" approach, where one obtains lower bounds on the exact complexity of problems by assuming certain popular conjectures about the exact time complexity of other famous problems.
%Our first contribution is an unfortunate negative result stating that Subtree Isomorphism cannot be solved in subquadratic time, even in easy looking cases like when both trees are binary, under the following plausible conjecture:
%
%\paragraph{The Orthogonal Vectors Conjecture: } There is no $\eps>0$ such that for all $c \geq 1$, there is an algorithm that given two lists of $n$ boolean vectors $A,B \subseteq \{0,1\}^d$ where $d=c\log{n}$ can determine if there is an orthogonal pair $a\in A, b \in B$, in $ O(n^{2-\eps})$ time.
%
%\medskip
%
%The best known algorithm for the Orthogonal Vectors (OV) problem above runs in mildly subquadratic $n^{2-1/O(\log{(d/\log{n})})}$ time~\cite{AWY15}. Williams~\cite{W04} showed that the OV conjecture is implied by the well-known Strong Exponential Time Hypothesis (SETH) of Impagliazzo, Paturi and Zane~\cite{IP01,IPZ01}. 
%
Our lower bounds are conditioned on the well-known Strong Exponential Time Hypothesis (SETH) of Impagliazzo, Paturi and Zane~\cite{IP01,IPZ01} which roughly states that as $k$ grows, $k$-SAT on $n$ variables requires $2^{(1-\eps)n}\textrm{ poly}(n)$ time for all $\eps>0$.
Our result for Subtree Isomorphism is the first ``SETH-hard" problem on \emph{trees}, which is an exciting addition to the diverse list\footnote{These are problems with $O(n^c)$ upper bounds for some $c>1$ and an $O(n^{c-\eps})$ algorithm, for some $\eps>0$, is known to refute SETH.} 
which already includes problems on vectors \cite{W04}, (general) graphs \cite{PW10, RV13, AV14,AGV15}, sequences \cite{AVW14, BI15, ABV15, BK15}, and curves \cite{Bring}.
Our ideas and constructions of ``tree gadgets" are useful for proving conditional lower bounds for other problems on trees. We demonstrate this with a lower bound for the Largest Common Subtree problem, discussed below.

\begin{theorem} 
\label{thm:lb}
For all $d\geq 2$, Subtree Isomorphism on two rooted, unordered trees of size $O(n)$, degree $d$, and height $h \leq 2 \log_d{n}+O(\log\log{n})$ cannot be solved in truly subquadratic $O(n^{2-\eps})$ time under SETH.
\end{theorem}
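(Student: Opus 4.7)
I would reduce from the Orthogonal Vectors (OV) problem: given sets $A, B \subseteq \{0,1\}^D$ with $|A|=|B|=N$ and $D = \Theta(\log N)$, decide whether some $a \in A$ and $b \in B$ satisfy $a \cdot b = 0$. Since an $O(N^{2-\eps})$ algorithm for OV refutes SETH, it suffices to construct rooted trees $G$ and $H$, each of size $n = N^{1+o(1)}$, maximum degree $d$, and height $\log_d n + O(\log\log n)$, such that $H$ is subtree-isomorphic to $G$ if and only if the OV instance is a YES instance.

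The first building blocks are coordinate gadgets. For each $k \in [D]$ I would build a rigid ``position marker'' $M_k$: a tree of size $O(\log D)$ with a distinct shape such that $M_k$ embeds into $M_{k'}$ iff $k = k'$ (for example, by encoding the binary representation of $k$ via asymmetric substructures whose subtree-isomorphism type is sensitive to small perturbations). To each $M_k$ I attach one of four small ``bit gadgets'': a query ($b_k = 1$) or null ($b_k = 0$) on the pattern side, and a free ($a_k = 0$) or block ($a_k = 1$) on the host side, designed so that null embeds into both free and block, query embeds into free but not into block, and no cross-position embedding can occur. The full vector gadget $V(v)$ for $v \in \{0,1\}^D$ is then a balanced $d$-ary tree of depth $\log_d D$ whose $D$ leaves hold the bit gadgets for $v_1, \ldots, v_D$; by the per-coordinate properties, $V(b)$ embeds into $V(a)$ iff $a \cdot b = 0$, and $|V(\cdot)| = \text{polylog}(N)$.

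To get OV-style existential quantification I would glue the vector gadgets together via a ``slot tree'': a balanced $d$-ary tree of depth $\log_d(2N-1)$ with $2N-1$ slot positions, identical in shape in $H$ and in $G$. On the $H$-side, place the $N$ vector gadgets $V(b_j)$ in $N$ slots and a tiny ``dummy'' (a single leaf or small universal tree) in each of the remaining $N-1$ slots. On the $G$-side, place the $N$ gadgets $V(a_i)$ in $N$ slots and a ``filler'' (a large, highly symmetric tree containing every $V(b_j)$ as a subtree) in each of the remaining $N-1$ slots. Any embedding $H \hookrightarrow G$ mapping root to root induces a hierarchical bijection between the $2N-1$ slots on each side; since $H$ has $N$ $V(b_j)$-slots but $G$ has only $N-1$ fillers, a pigeonhole argument forces at least one $V(b_j)$ to be paired with some $V(a_i)$, which by the vector-gadget property requires $a_i \cdot b_j = 0$. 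Conversely, given any orthogonal pair $(i^*, j^*)$ one exhibits a valid bijection: send $V(b_{j^*})$ to $V(a_{i^*})$, send every other $V(b_j)$ to a filler, and send every dummy to a remaining $a$-slot. Together this yields $n = N \cdot \text{polylog}(N) = N^{1+o(1)}$ and total height $\log_d(2N-1) + \log_d D + O(\log\log D) = \log_d n + O(\log\log n)$, well within the bound $2\log_d n + O(\log\log n)$, with maximum degree exactly $d$.

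The main obstacle is that the bijection induced at the slot level is not an arbitrary permutation of $2N-1$ elements but is constrained to the iterated wreath product $S_d \wr \cdots \wr S_d$ imposed by the slot tree. In principle this could obstruct the forward direction of the pigeonhole, since one might be unable to route a specific orthogonal pair $(i^*, j^*)$ into matched slot positions while simultaneously sending every other $V(b_j)$ to a filler. I would address this by distributing the $b$-gadget and dummy slots (and symmetrically the $a$-gadget and filler slots) in a balanced pattern across the slot tree, so that every internal subtree contains the same proportion of each kind of slot; the local permutations at each internal node then combine to give enough routing flexibility to realize any single chosen $b_{j^*}$-to-$a_{i^*}$ pairing. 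A secondary subtlety is that $2N-1$ is generally not a power of $d$, so the slot tree is slightly imbalanced; one fixes a single canonical imbalanced shape and reuses it in both $H$ and $G$ so that the level-by-level local-bijection argument still applies. Combining these pieces, an $O(n^{2-\eps})$ algorithm for Subtree Isomorphism would solve OV in $O(N^{2 - \Omega(\eps)})$ time, refuting SETH.
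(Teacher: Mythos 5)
Your high-level plan matches the paper's: reduce from Orthogonal Vectors, build small vector gadgets $V(\cdot)$ so that $V(b)\hookrightarrow V(a)$ iff $a\cdot b=0$, embed $N$ gadgets on each side into a constant-degree tree, and use a pigeonhole count (only $N-1$ ``universal'' slots on the host side) to force one $b$-gadget to meet an $a$-gadget. You also correctly flag the central obstruction: in a bounded-degree slot tree the induced slot permutation lies in an iterated wreath product $S_d\wr\cdots\wr S_d$, not all of $S_{2N-1}$, so realizing a \emph{specific} pairing $V(b_{j^*})\mapsto V(a_{i^*})$ while routing every other $V(b_j)$ to a filler is a nontrivial routing problem.

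The gap is in your proposed fix. ``Distribute the slots in a balanced pattern'' does not resolve the tension between soundness and completeness. Concretely, take $d=2$ and a complete binary slot tree with $2^k$ leaves, $N=2^{k-1}$, in the alternating layout ($V$-slot at $p\cdot 0$, dummy/filler at $p\cdot 1$ for each prefix $p$). If you put $N$ fillers on the $G$-side, the wreath element that flips the last level sends every $V(b_j)$ to a filler, so $H$ embeds into $G$ regardless of the OV instance and soundness fails. If instead you replace one filler slot, say $i_0\cdot 1$, by a padding leaf to get $N-1$ fillers, then for the completeness direction one computes (writing the wreath element as $\sigma(p\,x)=\tau(p)(x\oplus c_p)$) that routing $V(b_{j^*})\mapsto V(a_{i^*})$ with every other $V(b_j)$ going to a genuine filler forces $\tau(j^*)=i^*$ \emph{and} $\tau(j^*)=i_0$, i.e.\ it only works when $i^*=i_0$. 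Since $i^*$ is determined by the (unknown) OV instance and $i_0$ is fixed by your construction, the forward direction breaks. Unbalanced or clustered layouts run into the same sort of counting obstructions, and it is not clear any single-level identical-slot-tree layout can be made to work; indeed, your claimed height $\log_d n + O(\log\log n)$ is \emph{tighter} than the theorem's $2\log_d n + O(\log\log n)$, which should have been a warning sign.

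The paper sidesteps the routing problem entirely with an \emph{asymmetric, two-level} construction rather than identical slot trees. In $H$, a complete $d$-ary tree with $N$ leaves $u_1,\dots,u_N$ is followed, at each $u_i$, by a path of length $\log_d N + 1$ ending in $H_{\alpha_i}$. In $G$, a complete $d$-ary tree with $N$ leaves $v_1,\dots,v_N$ is used, but $v_1,\dots,v_{N-1}$ each carry a path of the same length ending in a universal all-zeros gadget $G_\gamma$, while $v_N$ carries a \emph{second} complete $d$-ary tree of depth $\log_d N$ whose leaves carry all of $G_{\beta_1},\dots,G_{\beta_N}$. Soundness is immediate: some $u_i$ must map to $v_N$, and the attached path then descends into the nested $d$-ary tree and lands on exactly one $G_{\beta_j}$, forcing $\alpha_i\cdot\beta_j=0$. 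Completeness is trivial because the only routing ever needed is ``send one designated leaf to one fixed target in a complete $d$-ary tree,'' which the wreath group always permits; the remaining $u_i$'s land on $G_\gamma$-branches, which are universal. This is exactly what buys the paper soundness and completeness simultaneously, at the cost of a factor-two in height. To make progress you would either need to adopt this two-level idea or prove a routing lemma for your balanced layout that you have currently only asserted.
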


More generally, if the size of the smaller tree is $n$ and the bigger tree is $m$, then our lower bound says that $O(nm^{1-\eps})$ time refutes SETH.
We remark that since SETH is believed to hold even for randomized algorithms, our lower bound is also a barrier for truly subquadratic randomized algorithms.

%Given the significance of the Subtree Isomorphism problem, we investigate the problem further. In light of our lower bound, we study restricted classes of inputs.
To complement our lower bound, we proceed to tackle natural restrictions of the problem algorithmically. The most natural way to restrict tree inputs is to bound the degree or height.
%
%When working with trees, it is very natural to restrict the degree and the height of the given trees.
Our lower bound leaves little room for improvement:
Even on binary trees of height $(2+o(1))\log{n}$ any algorithm must take quadratic time under SETH (note that the minimum height of a binary tree is $\log{n}$).

An intriguing case is when the trees are binary and almost complete, i.e., $d=2$ and $h = (1+o(1)) \log{n}$.
We are unable to show a super-linear lower bound in this case, nor are we able to obtain a deterministic algorithm that runs in truly subquadratic time. 
%It is hard to find much simpler questions without known linear time algorithms.
%Our second contribution is the introduction of a 
Nevertheless, we present a {\em randomized}, Las Vegas, algorithm that solves this case in truly subquadratic $O(n^{1.507})$ time.  Our algorithm solves more general cases:

\begin{theorem}\label{heightthm}
There is a randomized algorithm for rooted Subtree Isomorphism with expected running time $O(\min\{2.8431^h,n^2\})$ for trees $H$ and $G$ of size $O(n)$ and height at most $h$. In particular, the algorithm runs in time $O(n^{1.507})$ for trees of depth $(1+o(1))\cdot \log_2{n}$ and is truly subquadratic for trees of depth $h\leq 1.3267\cdot\log_2{n}$.
\end{theorem}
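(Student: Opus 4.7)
The plan is to recast the subtree-containment predicate as an AND-OR formula and evaluate it with the folklore randomized-decision-tree algorithm of Saks--Wigderson.

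Let $f(v,u)$ be the predicate ``the subtree of $H$ at $v$ embeds into the subtree of $G$ at $u$ with $v\mapsto u$.'' The standard recursion sets $f(v,u)=1$ when $v$ is a leaf, $f(v,u)=0$ when $\deg(v)>\deg(u)$, and otherwise
\[
f(v,u)\;=\;\bigvee_{\pi}\bigwedge_{i} f(v_i,u_{\pi(i)}),
\]
where $\pi$ ranges over injective maps from the children of $v$ into the children of $u$; the final answer is $\bigvee_{u\in V(G)} f(r_H,u)$. Since $H$ and $G$ are binary, each $f(r_H,u)$ expands into an alternating fan-in-$2$ AND-OR tree of depth at most $2\min(h_H,h_u)$.

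I will evaluate each $f(r_H,u)$ using the randomized recursion that visits the children of every OR (resp.\ AND) gate in uniformly random order and short-circuits as soon as a $1$ (resp.\ $0$) is seen. The Saks--Wigderson analysis gives expected query cost $O(((1+\sqrt{33})/4)^d)$ on an alternating fan-in-$2$ AND-OR tree of depth $d$, hence $O(((1+\sqrt{33})/4)^{2h_u})=O(2.8431^{h_u})$ per $u\in V(G)$. Summing over $u$ and using the easy bound $\sum_{u} c^{h_u}=O(c^h)$ for any binary tree of height $h$ and constant $c>2$ (the depth-$d$ nodes contribute at most $2^d c^{h-d}$, a convergent geometric series) gives total expected time $O(2.8431^h)$. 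The $\min\{\cdot,n^2\}$ guarantee follows by running, in parallel, the classical memoized bottom-up DP that evaluates $f(v,u)$ for every pair in $O(n^2)$ deterministic time (the local matching is trivial for binary children) and halting as soon as either routine finishes.

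The main obstacle is justifying that the Saks--Wigderson $((1+\sqrt{33})/4)^d$ bound really applies to our AND-OR trees, which are not uniform: sibling subtrees in $H$ or $G$ can have mismatched heights, and when $v$ or $u$ has fewer than $2$ children some gates collapse to smaller arity. The plan is to re-run the Saks--Wigderson induction directly in the parameter $h^{\star}=\min(h_v,h_u)$, observing that the extremal case at each level is a full depth-$2$ OR-of-ANDs on binary inputs; shrinking any gate only decreases the expected cost, so the bound $O(2.8431^{h^\star})$ survives the non-uniformity and plugs into the geometric-series summation above to yield the theorem.
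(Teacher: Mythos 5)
Your approach is in essence the paper's, phrased in the language of AND-OR tree evaluation: the paper's \emph{RandBinarySubIso} routine (with probability $1/2$ swap $H_L,H_R$, with probability $1/2$ swap $G_L,G_R$, then evaluate the depth-2 OR-of-ANDs with short-circuiting) is exactly the Saks--Wigderson random-order evaluation of the recursion you write down, and your constant $((1+\sqrt{33})/4)^2 = (17+\sqrt{33})/8 \approx 2.8431$ is the dominant eigenvalue of the same $2\times 2$ recurrence matrix the paper diagonalizes. The place your write-up stops short of a proof is precisely the step you flag as the ``main obstacle'': the Saks--Wigderson bound cannot be cited as a black box, since that theorem is for uniform complete NAND trees, whereas here the AND-OR formula has ragged heights, degenerate gates when a node has fewer than two children, and subformulae of wildly different sizes. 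Your plan to ``re-run the Saks--Wigderson induction in the parameter $h^\star$'' is exactly what is needed, but it is also where all the technical content of the proof lives: the paper carries this out as Lemma~\ref{rec}, defining $T_{yes}(h)$ and $T_{no}(h)$ as maxima over \emph{all} binary trees of height at most $h$ and establishing, by a case analysis of the $2\times 2$ matching instance and the random swaps, the coupled recurrences $T_{yes}(h)\le 2.25\,T_{yes}(h-1)+0.5\,T_{no}(h-1)$ and $T_{no}(h)\le T_{yes}(h-1)+2\,T_{no}(h-1)$. Taking the maximum over all trees of height $\le h$ in the definition of $T$ absorbs the non-uniformity automatically, so the monotonicity assertion you rely on (``shrinking any gate only decreases expected cost'') is never needed and would itself require an argument. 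Two smaller remarks: the paper restricts the root of $H$ to map to the root of $G$ and so never needs your geometric-series sum over $u\in V(G)$, though that sum is correct and harmless; and your parallel-run trick for the $\min\{2.8431^h,n^2\}$ guarantee is fine and matches the paper's implicit use of the $O(n^2)$ Edmonds--Matula baseline from the appendix. In short, the roadmap is right and matches the paper, but the recurrence analysis you leave as a plan is the proof.
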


%Thus, we show that the problem can be solved in truly subquadratic time in interesting cases like on binary trees of depth $< 1.3 \log{n}$.
%This is the first positive result for Subtree Isomorphism in more than 15 years.
%%, and is the first polynomial improvement in the case of rooted binary trees since 1968.
Our algorithm is simple, natural, and easy to implement.
%, and we expect it to show significant improvements in practice.
Perhaps more interesting than the upper bound itself is that the technique we use to obtain it uses a technique from randomized decision tree complexity.

We also consider the case of ternary trees, providing a fast Las Vegas algorithm for it. Our approach is similar to that of the binary tree case. However, here we use a computer program to analyze the expected running time of the algorithm. 

\begin{theorem}
\label{thm:deg3}
	There is a randomized algorithm that can solve Subtree Isomorphism on two rooted \emph{ternary} trees of size $O(n)$ and height at most $h$  in expected $O\left(\min\left\{ 6.107^h , n^2 \right\}\right)$ time.
\end{theorem}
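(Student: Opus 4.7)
The plan is to mimic the approach of Theorem~\ref{heightthm}, with the $2\times 2$ perfect-matching predicate replaced by its $3\times 3$ analogue, and to rely on a computer search to upper-bound the expected cost of evaluating the resulting matching predicate by a randomized decision tree.

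Setup. For the recursion, define $\mathrm{embed}(v,u)$ to be true iff the subtree $H_v$ embeds into $G_u$. Writing the three children of $v$ as $v_1,v_2,v_3$, the three children of $u$ as $u_1,u_2,u_3$, and $X_{ij}:=\mathrm{embed}(v_i,u_j)$, the standard characterization states that $\mathrm{embed}(v,u)$ holds iff the $3\times 3$ bipartite graph $\{(i,j):X_{ij}\}$ admits a perfect matching, i.e.\ iff
\[
  f(X) \;=\; \bigvee_{\pi\in S_3}\bigwedge_{i=1}^{3}X_{i,\pi(i)}
\]
evaluates to true. The algorithm is then the natural one: to decide $\mathrm{embed}(v,u)$ at the roots, recursively query only those child subproblems demanded by a randomized strategy that evaluates $f$, memoizing answers across the computation.

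First I would formalize the recurrence. Let $T(h)$ denote the expected cost of computing $\mathrm{embed}(v,u)$ on trees of height at most $h$, and let $\alpha$ be an upper bound on the expected number of variables probed by the randomized strategy for $f$, taken over worst-case inputs. By linearity of expectation, $T(h)\le\alpha\cdot T(h-1)+O(1)$, hence $T(h)=O(\alpha^h)$. Memoizing the embed-table caps the number of distinct subproblems by $O(n^2)$, yielding the claimed $O(\min\{\alpha^h,n^2\})$ bound once $\alpha\le 6.107$ is established.

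Next I would exhibit a randomized decision tree for $f$ achieving $\alpha\le 6.107$. The natural family of strategies first samples a uniformly random permutation of the rows and columns (exploiting the symmetry of $f$ under the $S_3\times S_3$ action) and then queries variables in an adaptive order designed to prune as many of the six permutation clauses as possible. By Yao's minimax principle, it suffices to produce one randomized strategy whose expected cost on each of the $2^9=512$ possible inputs is at most $6.107$. Since the space of adaptive query strategies on nine variables is combinatorially large, I would formulate the optimization as a minimax program over deterministic decision trees supported on a small set, and solve it numerically; this is the computer-program step announced in the theorem, and it produces the constant $6.107$.

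The main obstacle is the decision-tree optimization itself. In the binary case the matching predicate is a $2$-term DNF on $4$ variables, amenable to a short closed-form analysis that yields the constant $2.8431$ of Theorem~\ref{heightthm}; in the ternary case the $6$ clauses on $9$ variables and the richer $S_3\times S_3$ symmetry make a hand analysis impractical and force reliance on a computer search. A secondary subtlety is that child subtrees may have heights strictly less than $h-1$: absorbing this conservatively into the uniform bound $T(h-1)$ keeps the recurrence honest and closes the proof.
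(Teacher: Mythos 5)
Your high-level plan (recursive Edmonds--Matula reduced to a $3\times 3$ perfect-matching predicate, analyzed via a randomized decision tree whose constants are found by computer) matches the paper. But the recurrence you write down is weaker than what the paper uses, and it cannot yield the constant $6.107$.

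You set $T(h)\le\alpha\cdot T(h-1)+O(1)$ where $\alpha$ is the worst-case expected number of probes the randomized decision tree makes, and you assert $\alpha\le 6.107$. This does not hold for the paper's strategy, and I do not believe it holds for any strategy. The paper's own computer search reports that on the worst \emph{no} instance (complement equal to a $4$-cycle plus a disjoint edge) the expected number of probes is $26/9 + 37/9 = 7$, and on the worst \emph{yes} instance it is $133/36 + 5/3 \approx 5.36$. Treating all probes as equally expensive would therefore only give $T(h)\le 7\cdot T(h-1)$, i.e.\ $O(7^h)$. The constant $6.107$ is \emph{not} an expected query count; it is the Perron eigenvalue of a $2\times 2$ matrix obtained from a two-variable recurrence. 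The paper's key refinement --- used in the binary case (Lemma~\ref{rec} and Theorem~\ref{heightthm_re}) as well, so it is not a ``short closed-form analysis'' that your ternary version needs to emulate --- is to track $T_{\mathrm{yes}}(h)$ and $T_{\mathrm{no}}(h)$ separately and count, for each worst-case input, how many sub-probes return \emph{yes} and how many return \emph{no}. This gives
\[
\begin{pmatrix}T_{\mathrm{yes}}(h)\\ T_{\mathrm{no}}(h)\end{pmatrix}
\;\le\;
\begin{pmatrix}133/36 & 5/3\\ 26/9 & 37/9\end{pmatrix}
\begin{pmatrix}T_{\mathrm{yes}}(h-1)\\ T_{\mathrm{no}}(h-1)\end{pmatrix},
\]
and the bound $O(6.107^h)$ is the spectral radius of that matrix, which lies strictly between the two row sums $5.36$ and $7$. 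The intuition you are missing is that \emph{yes}-subcalls and \emph{no}-subcalls have different costs (the all-ones eigenvector is not invariant), so collapsing the recursion to a single scalar throws away exactly the structure that makes $6.107$ achievable. To repair your proof you must replace the scalar $\alpha$ with the $2\times 2$ (yes/no) cost matrix and compute its largest eigenvalue; the computer search then needs to output, for each of the $512$ inputs and each random choice, the pair (number of yes-probes, number of no-probes), not just the total.

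Two minor points. First, you have Yao's principle backwards for what you need: Yao's principle is used to lower-bound randomized complexity via distributions on inputs; to \emph{upper}-bound it you simply exhibit a randomized strategy, no minimax step required. Second, memoization is harmless but not needed for the $n^2$ term; the paper's appendix shows the plain recursion already gives $O(mn)$ because the recursion tree has one node per pair of tree vertices.
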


Finally, we generalize our algorithms to obtain truly subquadratic algorithms for rooted Subtree Isomorphism on trees with small height and constant degree $d$, for any $d \geq 2$. 

\begin{theorem} \label{constanddegree}
	There is a randomized algorithm that solves Subtree Isomorphism on two rooted trees of size $O(n)$, constant degree $d$, and height at most $h$  in expected time $$O\left(\min\left\{\left(d^2-\frac{1}{3} d+{2 \over 3}\right)^h , n^2 \right\}\right).$$
	In particular, the algorithm is strongly subquadratic for trees of height 
	$$
		h~\leq~ \left(\frac{\log(d^2)}{\log(d^2-\frac{1}{3} d+{2 \over 3})}-\epsilon\right) \cdot \log_d{n}~,
	$$
	for any constant $\epsilon>0$.
\end{theorem}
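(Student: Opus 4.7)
The plan is to extend the recursive randomized algorithm used in Theorems~\ref{heightthm} and~\ref{thm:deg3} to arbitrary constant degree $d$. Let $\mathcal{A}(u,v)$ decide whether the subtree of $H$ at $u$ embeds into the subtree of $G$ at $v$. At an internal pair $(u,v)$, the algorithm considers the $d \times d$ Boolean matrix $M_{ij} = \mathcal{A}(u_i, v_j)$ over the children and tests whether $M$ admits a perfect matching. Each entry of $M$ costs one recursive call at height $h-1$, so the naive strategy yields $T(h) \leq d^2 \cdot T(h-1) \approx n^2$. The improvement must come from a randomized procedure that, on any worst-case matrix $M$, queries strictly fewer than $d^2$ entries in expectation.

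\textbf{Key lemma and how it gives the theorem.} The technical heart is a query-complexity statement: there is a randomized algorithm deciding the bipartite perfect-matching predicate on a $d \times d$ Boolean matrix that makes at most $d^2 - (d-2)/3$ queries in expectation on any input. Plugging this into the recursion and absorbing polynomial bookkeeping yields
\[
 T(h) ~\leq~ \left(d^2 - \tfrac{d}{3} + \tfrac{2}{3}\right) \cdot T(h-1),
\]
which unfolds to $O\!\left((d^2 - d/3 + 2/3)^h\right)$; combining with the classical $O(n^2)$ algorithm from Section~\ref{sec:UB} gives the claimed minimum. The strongly subquadratic regime then follows by choosing $h$ such that $(d^2 - d/3 + 2/3)^h = n^{2-\epsilon}$ and using $n = \Theta(d^h)$, which reproduces exactly the stated threshold on $h$.

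\textbf{Designing the matching oracle.} To achieve the $(d-2)/3$ expected savings I would probe $M$ along a uniformly random permutation $\pi \in S_d$, revealing the diagonal entries $M_{i,\pi(i)}$ in a random order and short-circuiting whenever the revealed entries already determine the matching predicate --- e.g., a complete $1$-diagonal gives acceptance, while a row or column that becomes forced, or a visible violation of Hall's condition on the currently-known $1$-edges, gives early rejection. On inputs where no short-circuit fires, the algorithm falls back to querying the remaining entries and running ordinary bipartite matching. The expected number of skipped queries should be driven by the probability that the first few random probes leave the partial matrix in a determined state.

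\textbf{Main obstacle.} The hard part is proving the expected-query bound $d^2 - (d-2)/3$ uniformly against worst-case matrices for every $d \geq 2$. For $d=2, 3$ case enumeration and computer-assisted search already produce strictly better constants ($2.8431$ and $6.107$), so the general strategy must be simple enough to analyze cleanly but strong enough never to degenerate to the trivial $d^2$ for any $d$. I would attempt this via a potential function on partial-information states whose expected drop per random query step exceeds $\Theta(1/d^2)$, accumulating over $O(d^2)$ steps to the claimed $(d-2)/3$ savings; carrying out this amortization so that it holds against every adversarial matrix --- rather than only under a favorable distribution --- is the delicate part, and is where I expect most of the proof's work to lie.
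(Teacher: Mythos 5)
Your high-level reduction is exactly the paper's: cast the per-node work as a randomized query algorithm for $d\times d$ bipartite perfect matching (each query is one recursive call at height $h-1$), establish an expected-query bound of $d^2 - \frac{1}{3}d + \frac{2}{3}$, unfold the recurrence, and take the minimum with the classical $O(n^2)$ Edmonds--Matula bound. That framing, and the derivation of the subquadratic height threshold from it, are both correct.

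The gap is that you never prove the query-complexity lemma, and the strategy you sketch for it is both different from the paper's and unlikely to yield a clean worst-case bound. Probing the ``diagonal'' $M_{i,\pi(i)}$ of a random permutation reveals only $d$ entries, which cannot in general determine the matching predicate; the fallback to full querying means your savings hinge entirely on an unproved potential-function/amortization argument that you yourself flag as the hard part. The paper instead splits the analysis by outcome and designs two \emph{separate} simple strategies, each with an elementary analysis. For ``yes'' instances: query all $d^2$ edges in a uniformly random order and stop once some fixed perfect matching $M^*$ has been fully revealed; each of the $d^2-d$ non-$M^*$ edges is skipped with probability $\frac{1}{d+1}$, giving expected savings $\geq d-2$. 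For ``no'' instances: randomly choose one side, randomly permute its vertices, and query row by row, stopping as soon as Hall's condition is violated among the revealed rows; a Hall violator $S'$ of size $\leq \lceil d/2 \rceil$ exists on the chosen side with probability $\geq 1/2$, yielding expected savings roughly $d/2$. Running the ``yes''-optimized strategy with probability $1/3$ and the ``no''-optimized strategy with probability $2/3$ balances the two cases and gives $d^2-\frac{1}{3}d+\frac{2}{3}$ queries in expectation on every input. This case split plus Hall's theorem is the idea your proposal is missing; without it (or something equally concrete) the key lemma remains unestablished.
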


The bound in the above theorem is not tight for small $d$, as our algorithms for $d=2$ and $d=3$ show. 
For example, it is not subquadratic (on small depth trees) unless $d>3$. 
%However, it is plausible that the runtime of $(d^2-\Theta(d))^h$ will turn out to be optimal.
To obtain the upper bound, we prove a new randomized query complexity upper bound for bipartite perfect matching, which could be of independent interest (Lemma~\ref{query}).

This work is another example of a fine-grained study of the complexity of fundamental problems in P under natural parameterizations. This approach was formalized in two recent works~\cite{AVW15,GMN15}.

%Finally, we use a computer program to optimize the analysis for the $d=3$ case, obtaining the following improved upper bound for rooted ternary trees.
%
%\begin{theorem}
%\label{thm:deg3}
%%Subtree isomorphism for rooted ternary trees of height $h$ can be solved in expected time $O(6.107^h)$.
%	There is a randomized algorithm that can solve Subtree Isomorphism on two rooted \emph{ternary} trees of size $O(n)$ and height at most $h$  in expected $O\left(\min\left\{ 6.107^h , n^2 \right\}\right)$ time.
%\end{theorem}
%
\paragraph{Techniques and other results.}

To prove our SETH hardness results we show reductions from Orthogonal Vectors to Subtree Isomorphism in Section~\ref{sec:LB}.
The reductions follow all previous SETH-hardness results in spirit, but require careful constructions of ``tree gadgets" that represent vectors, as well as techniques for combining the gadgets into two big trees $H$ and $G$ for which the existence of an orthogonal pair of vectors determines whether $H$ is contained in $G$.
Our reduction is clean and simple, but it gets more tricky when restricted to trees of constant degree.
%We attempt to describe the difficulty in Section~\ref{sec:LB}.

Our reduction is easily modified to obtain similar lower bounds for related problems such as \emph{Largest Common Subtree} on two trees (LCST).
This problem is NP-hard when the number of trees is a parameter or when the two trees are labelled (and unrooted) \cite{ZSS92,ZJ94}, while some approximation and parameterized algorithms are known \cite{KMY95,ATMA14,AHM00}.
When the two trees are binary and unlabeled, the problem can be solved in quadratic time, and an adaptation of Theorem~\ref{thm:lb} shows that even when the height is $(1+o(1))\log{n}$, a truly subquadratic algorithm refutes SETH. 

\begin{theorem}
\label{thm:LCST}
For all $d\geq 2$, The Largest Common Subtree problem on two rooted trees of size $O(n)$, degree $d$ and height $h \leq \log_d{n}+O(\log\log{n})$ cannot be solved in truly subquadratic $O(n^{2-\eps})$ time under SETH.
\end{theorem}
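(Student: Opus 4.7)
The plan is to reduce from Orthogonal Vectors (OV) by adapting the gadgets behind Theorem~\ref{thm:lb} to a symmetric two-sided construction. The key observation is that the bipartite matching built into the LCST recurrence automatically implements the ``there exists a pair'' quantifier, so only one ``selector'' level is needed, halving the height budget compared with Subtree Isomorphism. Given OV instances $A, B \subseteq \{0,1\}^D$ with $|A| = |B| = N$ and $D = \Theta(\log N)$, I will build two rooted trees $T_A, T_B$ of size $O(ND) = O(n)$ (rescaling $N = \Theta(n/\log n)$), maximum degree $d$, and height $\log_d n + O(\log\log n)$, such that a fixed threshold on $\text{LCST}(T_A, T_B)$ decides whether some pair $(a,b)$ satisfies $\langle a,b\rangle = 0$. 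A truly subquadratic LCST algorithm would then refute SETH via OV.

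I construct $T_A$ (resp.\ $T_B$) by taking a shared \emph{skeleton} $S$ --- a complete $d$-ary tree of depth $\log_d N$ with $N$ leaves --- and attaching a vector gadget $V(a_i)$ (resp.\ $V(b_j)$) below the $i$-th (resp.\ $j$-th) leaf. The gadget $V(v)$ is a size-$O(D)$, depth-$O(\log D) = O(\log\log n)$, degree-$\leq d$ tree taken (with minor modifications) from the proof of Theorem~\ref{thm:lb}, designed to achieve the dichotomy
\[
\text{LCST}\bigl(V(a), V(b)\bigr) \;=\; \begin{cases} L_o & \text{if } \langle a, b\rangle = 0, \\ L_n & \text{otherwise,} \end{cases}
\]
for fixed integers $L_o > L_n$ depending only on $D$. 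Internally, $V(v)$ places coordinate subgadgets $C_k(v[k])$ at the leaves of a balanced $d$-ary address tree so that any internal LCST matching must pair coordinate $k$ of $V(a)$ with coordinate $k$ of $V(b)$, and the paired contribution is maximal unless both bits equal $1$. A crucial AND-style synchronizer near the root --- modeled on the AND/OR gadgetry of Theorem~\ref{thm:lb} --- caps the LCST loss at a single fixed amount whenever at least one coordinate conflicts, producing the required two-valued (rather than sum-over-conflicts) behavior.

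Because the two skeletons are identical, a direct induction on skeleton depth shows that the LCST preserves all of $S$ and satisfies
\[
\text{LCST}(T_A, T_B) \;=\; |S| \;+\; \max_{\pi} \sum_{i=1}^N \text{LCST}\bigl(V(a_i),\, V(b_{\pi(i)})\bigr),
\]
where $\pi$ ranges over the permutations of $[N]$ realizable as level-wise matchings along $S$, i.e.\ the wreath product $S_d \wr \cdots \wr S_d$. Since this group acts transitively on the leaves, for every target pair $(i^*, j^*)$ some admissible $\pi$ realizes $\pi(i^*) = j^*$. Combined with the gadget dichotomy, the right-hand side is $\geq |S| + L_o + (N-1) L_n$ when an orthogonal pair exists and is exactly $|S| + N L_n$ otherwise; the gap $L_o - L_n \geq 1$ is detected by a single polynomial-time threshold test.

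The hard part will be realizing the dichotomy inside the vector gadget: a naive coordinate-wise construction loses one LCST node per conflicting coordinate, and that single-pair signal would be drowned by the noise accumulated across the $N$-pair skeleton matching. I plan to overcome this with the AND-style synchronizer sketched above --- a subtree whose contribution to the LCST is a threshold predicate of ``all coordinates compatible'' rather than a sum, echoing the all-or-nothing gadgets used to prove Theorem~\ref{thm:lb}. Once that gadget is in hand, the rest is routine bookkeeping: skeleton depth $\log_d N$ plus gadget depth $O(\log D)$ yields $h \leq \log_d n + O(\log\log n)$, and the size $O(ND) = O(n)$ matches the theorem's parameters for every $d \geq 2$.
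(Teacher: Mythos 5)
Your high-level architecture matches the paper's: both build a complete $d$-ary skeleton with $N$ leaves, attach per-vector gadgets below the leaves, and argue that the wreath-product action on leaves lets the LCST route any single pair $(i^*,j^*)$ to each other, so a fixed threshold on $\text{LCST}(T_A,T_B)$ decides OV. You also correctly identify the crux: the vector gadget must exhibit a \emph{two-valued} LCST, $L_o$ versus $L_n$, with a gap independent of how many coordinates conflict, or the signal of the one orthogonal pair drowns in the sum over $N$ pairs.

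However, you stop short of constructing that gadget. The ``AND-style synchronizer'' you describe is never actually built, and it is not what the paper does. There is no AND/OR-gate machinery in Lemma~\ref{vector_gadgets}; that lemma gives a Subtree-Isomorphism gadget whose behavior is all-or-nothing for the \emph{containment} question, which does not by itself give a two-valued LCST (a non-containment could still have a large or small common subtree depending on the number of conflicting coordinates). The paper's actual device is a \emph{decoy branch}: after normalizing so every $\alpha\in A$ starts with a $1$, every $\beta\in B$ starts with a $0$, and all $\alpha\in A$ have the same Hamming weight (paying a factor $D+1$ by splitting $A$ by weight), it sets $H_\alpha = r\to H'_\alpha$ and $G_\beta = r\to\{G'_\delta,\,G'_\beta\}$ where $\delta=(1,0,\dots,0)$. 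If $\alpha\cdot\beta=0$ then $H'_\alpha\subseteq G'_\beta$ gives LCST $= |H_\alpha|=E'$; otherwise $H'_\alpha$ embeds into neither child of $G_\beta$'s root, but the subtree $H'_{\alpha'}$ (with $\alpha$'s leading $1$ flipped to $0$) always fits into $G'_\delta$, giving exactly $E'-1$. The equal-weight normalization is what makes $E'$ a constant across $\alpha$, so the threshold is well-defined. Without some concrete mechanism of this kind --- either the decoy branch or a genuine construction of the synchronizer you sketch --- the dichotomy $\text{LCST}(V(a),V(b))\in\{L_o,L_n\}$ is asserted but not proved, and that is precisely the theorem's technical content. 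Everything downstream of it in your write-up is fine; this one gap is the proof.
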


Theorem~\ref{thm:LCST} is surprising when contrasted with our other results.
On the one hand, for arbitrary rooted trees with constant degrees, both Subtree Isomorphism and the harder-looking LCST have tight quadratic upper and (conditional) lower bounds.
On the other hand, we show that under the further restriction that the trees have small depth (as in Theorem~\ref{heightthm}), Subtree Isomorphism can be solved in truly subquadratic time, while by Theorem~\ref{thm:LCST} the LCST problem \emph{cannot}, under SETH.

%
%A few closely related variants of the problem, including the \emph{unrooted version} (i.e, deciding whether a tree is isomorphic to an arbitrary subgraph of another tree), the \emph{labelled version} (i.e, solving the subtree isomorphism problem where some or all of  nodes contain labels) or the \emph{Largest Common Subtree} problem (written shortly as LCST), the problem of finding a tree of maximal size isomorphic to a couple of given \emph{rooted} trees, can all be surprisingly solved in quadratic time when the degree is bounded by a constant (see [ref] [ref] [ref] [ref] [ref]).
%On the other hand, we show that under the restriced conditions we introduce in the upper bounds section, under which the original problem can be solved in strongly subquadratic time, the LCST problem stays hard to solve in subquadratic time, as stated in the following theorem
%[LCST lower bound thm]

%[**] In some sense, our positive results are even more surprising since the problem had been attacked algorithmically by many researchers in the past.
We attribute our new algorithmic results to two ingredients.
The first important ingredient comes from our \emph{lower bounds}.
In particular we noticed that when the trees are binary and the depth is $(1+\eps)\log{n}$, it is difficult to implement our reductions. This turned our attention to finding upper bounds.
Knowing the hard cases thus allowed us to focus on the solvable cases.
This is an important byproduct of the recent research on conditional lower bounds in P.

The second ingredient was making a connection between this problem and a seminal result from randomized decision tree complexity \cite{SW86}. 
Our algorithm for binary (and ternary) trees is inspired by the following well-known result from complexity theory:
Given a formula represented by a complete AND-OR tree on $n$ leaves that represent the variables, can you evaluate the formula without looking at all the inputs?
The surprising fact is that this is possible with randomization: to evaluate a gate, we guess which child to check first at random, and if we see a $1$ input to an OR gate, or a $0$ input to an AND gate, we do not have to check the other child.
Therefore it is possible to evaluate the formula by only looking at $n^{1-\eps}$ inputs.
This result has found many applications in various areas of complexity theory, learning theory, and quantum query complexity \cite{ACRSZ10}.
%, but we are not aware of previous applications of the idea to the design of polynomial time algorithms.
%%%Applying the trick gives a simple algorithm for the problem with clear benefits that we discuss in Section~\ref{sec:UB}. 
%
%[**] Our ideas lead to new algorithms for other related problems like \emph{Subtree Homeomorphism} \cite{1,2,3}. [also, labeled trees]

\paragraph{Other related work.}
In the late 1980s, Subtree Isomorphism was considered from the viewpoint of efficient parallel algorithms. 
Lingas and Karpinski \cite{LK89} placed the problem in randomized $NC^1$.
Gibbons, Miller, Karp, and Soroker \cite{GMKS90} independently obtained the same result and also showed an $NC^1$ reduction from bipartite matching to Subtree Isomorphism. Their reduction takes a matching instance on $n$ nodes and produces trees on $\Omega(n^3)$ nodes, and therefore does not imply a lower bound on the time complexity of Subtree Isomorphism even assuming that current matching algorithms are optimal.
Note that any many-to-one reduction from matching (where the input is of size $\Omega(n^2)$) will generate trees of size $\Omega(n^2)$.
To get our quadratic lower bound we reduce from a different problem, namely Orthogonal Vectors.

Many related cases of the problem can be solved in near-linear time.
For example, when both trees have exactly the same size, we get the \emph{Tree Isomorphism} problem which was solved in $O(n)$ time by Hopcroft and Tarjan \cite{HT72}, and later other linear time algorithms were suggested (see \cite{DIR99} and the references therein).
Another example is the case of \emph{ordered trees}, meaning that there is an order among the children of a node that cannot be modified in the isomorphism.
Also, when a ``subtree" is defined to be a node and all its descendants, ``subtree" isomorphism can be solved in linear time \cite{Verma92}.

%[Other related problems: thorup's labelled leafs? tree edit distance?]

\section{SETH Lower Bounds}
\label{sec:LB}
The Strong Exponential Time Hypothesis (SETH) states that for every $\eps>0$ there exists a $k$ such that $k$-SAT on $n$ variables cannot be solved in $O(2^{(1-\eps)n}\poly n)$ time.
Williams \cite{W04} related SETH to a polynomial time problem called Orthogonal Vectors (OV). The inputs to OV are two lists of $N$ vectors in $\{0,1\}^D$ and the output is ``yes" if and only if there is a pair of vectors $\alpha,\beta$, one from each list, that are orthogonal, i.e. for all $i \in [D]$ either $\alpha[i]$ or $\beta[i]$ is equal to $0$.
Williams reduced CNF-SAT to OV so that if OV
 can be solved in $O(N^{2-\eps})$ time when $D=\omega(\log{N})$, for some $\eps>0$, then CNF-SAT on $n$ variables and $\poly~n$ clauses can be solved in $O(2^{(1-\eps')n}\poly~n)$ time for some $\eps'>0$, and SETH is false.

In this section we reduce CNF-SAT, via the Orthogonal Vectors (OV) problem, to different variants of the Subtree Isomorphism problem to prove our SETH-based lower bounds. 
%The inputs to OV are two lists of $N$ vectors in $\{0,1\}^D$ and the output is ``yes" if and only if there is a pair of vectors $\alpha,\beta$, one from each list, that are orthogonal, i.e. for all $i \in [D]$ either $\alpha[i]$ or $\beta[i]$ is equal to $0$.
%Williams \cite{W04} showed that if OV can be solved in $O(N^{2-\eps})$ time when $D=\omega(\log{N})$, for some $\eps>0$, then for every $k$, $k$-SAT on $n$ variables and $n^{O(1)}$ clauses can be solved in $2^{(1-\eps/2)n}$ time, and SETH is false.

\subsection{Hardness for Subtree Isomorphism}

\paragraph{A simpler reduction.}
We start with a ``warm-up" reduction that presents the high-level idea of our proofs.
In Theorem \ref{th:unbounded} below we reduce OV to Subtree isomorphism on trees with $n=O(ND)$ vertices, unbounded degree, and height $h=O(D)$. We later show how to change the construction to get trees with small constant degree and small height. 

\begin{theorem} \label{th:unbounded}
Orthogonal Vectors on two lists of $N$ vectors in $\{0,1\}^D$ can be reduced to Subtree Isomorphism on two trees of size $O(ND)$ and depth $O(d)$.
\end{theorem}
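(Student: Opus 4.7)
The plan is to follow the standard SETH-hardness template for OV reductions: design a ``coordinate gadget'' that tests one-bit compatibility, glue $D$ copies along a spine to obtain a ``vector gadget'' whose subtree-embedding encodes bitwise orthogonality, and then combine the $N$ per-vector gadgets of each list into a pair of large trees using a padding trick so that the existence of a single orthogonal pair is equivalent to the existence of an embedding.

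The first step is the coordinate gadgets. I would choose small trees $A_0^H, A_1^H$ (on the $H$-side) and $B_0^G, B_1^G$ (on the $G$-side) such that $A_b^H$ is a rooted subtree of $B_{b'}^G$ iff $b \cdot b' = 0$. A simple choice is to take $A_0^H$ and $B_1^G$ to be single leaves and $A_1^H$ and $B_0^G$ to be a path of constant length: all three compatible bit combinations admit the embedding, while in the forbidden $11$ case a path cannot embed into a leaf. Next I would form the vector gadgets: $T^H_\alpha$ is a spine path on $D$ nodes $p_1,\ldots,p_D$ where each $p_k$ has an additional child that is a copy of the coordinate gadget $A^H_{\alpha[k]}$, and $T^G_\beta$ is defined symmetrically from $B^G_{\beta[k]}$. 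The main claim at this level is that $T^H_\alpha$ embeds in $T^G_\beta$ iff $\alpha\cdot\beta = 0$: one direction maps spine to spine and applies the coordinate-gadget property; the other exploits the gap in depth and size between the spine subtree hanging off $p_k$ (of depth $\Theta(D-k)$) and any coordinate gadget (of constant depth), forcing every embedding to align the two spines, after which the coordinate property kicks in bit by bit.

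The global step uses a ``universal slot'' padding. I would let $H$ be a root $r_H$ with $N$ children whose $i$-th subtree is $T^H_{\alpha_i}$, and $G$ be a root $r_G$ with $2N-1$ children: the $N$ subtrees $T^G_{\beta_1},\ldots,T^G_{\beta_N}$ together with $N-1$ copies of the ``universal'' gadget $U := T^G_{\vec{0}}$. Because $\vec{0}$ is orthogonal to every $\alpha$, every $T^H_\alpha$ embeds into $U$ by the previous step. If an orthogonal pair $(\alpha_i,\beta_j)$ exists, map $T^H_{\alpha_i}$ into the $T^G_{\beta_j}$ slot and each remaining $T^H_{\alpha_k}$ into a distinct copy of $U$. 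Conversely, any embedding must send $r_H$ to $r_G$ (the only node of arity $\geq N$, since every vector-gadget internal node has at most two children), after which the $N$ top subtrees of $H$ inject into the $2N-1$ top subtrees of $G$; by pigeonhole at least one $T^H_{\alpha_i}$ lands in a real $T^G_{\beta_j}$ slot, which forces $\alpha_i\cdot\beta_j = 0$. A direct count gives $|H|,|G| = O(ND)$ and depth $D + O(1) = O(D)$, matching the statement.

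The main obstacle I anticipate is making the spine-alignment argument completely airtight: I need to rule out every ``twisted'' embedding in which the spine child of some $p_k$ is swapped with its attached coordinate gadget, uniformly for all bit patterns of $\alpha$ and $\beta$. I expect a careful depth or node-count comparison to suffice once the constant sizes of the coordinate gadgets are chosen appropriately, with perhaps a small amount of bookkeeping near the bottom of the spine (where the ``$D-k$'' advantage vanishes) and a verification that $U = T^G_{\vec{0}}$ really absorbs every $T^H_\alpha$.
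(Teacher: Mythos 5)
Your proposal matches the paper's proof in all essential respects: a spine-based vector gadget with per-coordinate attachments encoding bits so that gadget containment is equivalent to orthogonality, and a global tree with a root of $N$ versus $2N-1$ children (the extra $N-1$ being ``universal'' all-zeros slots), combined with a pigeonhole argument. The paper's gadget is just the degenerate instance of your coordinate-gadget abstraction in which $A_0^H$ is empty and $A_1^H$ is a single pendant leaf (and symmetrically for the $G$-side), and it resolves the bottom-of-spine alignment concern you correctly flagged by the simple device of extending the spine two nodes past coordinate $D$, so that the spine is the unique root-to-leaf path of length $D+2$ and any embedding must map spine to spine.
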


\begin{proof}
Let us denote the vectors of the first list by $A=\{ \alpha_1,\ldots,\alpha_N \}$ and of the second list by $B = \{ \beta_1,\ldots,\beta_N\}$ and recall that our goal is to find a pair of vectors $\alpha\in A, \beta\in B$ such that for every coordinate $i \in [D]$ either $\alpha[i]=0$ or $\beta[i]=0$.

The first ingredient in the reduction is to construct \emph{vector gadgets}.
%We will have two types of vectors gadgets, type $P$ and type $T$. 

For every vector in the first list $\alpha \in A$ we create a vector gadget: a tree $H_\alpha$ of size $O(D)$ as follows.
First, add a path $u_0 \to u_1 \to u_2 \to \cdots \to u_{D+2}$ and let $u_0$ be the root of $H_\alpha$.
%and add three children $x,y,z$ to the node $u_0$, i.e. add the edges $u_0 \to x, u_0 \to y$, and $u_{0} \to z$. 
%It will be convenient to think of $u_0$ as the root of $P_\alpha$ and to let the edges be directed from nodes to their children.
Then, for each coordinate $i \in [D]$ we consider $\alpha[i]$ and if it is a $1$ we add a node $u_{i,1}$ to the tree $H_\alpha$ as the child of the node $u_i$, i.e. we add the edge $u_i \to u_{i,1}$. 
Otherwise, if $\alpha[i]=0$, the only child of $u_i$ will be $u_{i+1}$.

We now define the vector gadgets for the vectors in the second list. 
For every $\beta \in B$ we create a vector gadget: a tree $G_\beta$ of size $O(D)$ as follows.
The first step is similar, we add a path $v_0 \to v_1 \to v_2 \to \cdots \to v_{D+2}$ and let $v_0$ be the root.
% and three children $x',y',z'$ connected to $v_0$.
The difference is in the second step.
For each coordinate $i \in [D]$, we consider $\beta[i]$ and if it is a $0$ we add a node $v_{i,0}$ to $G_\beta$ as the child of the node $v_i$, i.e. we add the edge $v_i \to v_{i,0}$. 

The following simple claim is the key to our reduction and explains our gadget constructions.  

\begin{claim}
\label{cl1}
$H_\alpha$ is isomorphic to $G_\beta$ iff $\alpha,\beta$ are orthogonal.
\end{claim}

\begin{proof}
For the first direction, assume that $\alpha,\beta$ are orthogonal and therefore for every $i \in [D]$ we know that either $\alpha[i]=0$ or $\beta[i]=0$.
We will define a mapping $f$ from $H_\alpha$ to a subgraph of $G_\beta$ such that if $\{u,v\}$ is an edge in $H_\alpha$ then $\{ f(u),f(v)\}$ is an edge in $G_\beta$.
First, we map the roots and paths to each other, by setting $f(u_i)=v_i$ for all $i \in \{0,\ldots,D+2\}$.
Then, we consider every $i \in [D]$ for which $\alpha[i]=1$ and map $u_{i,1}$ to the node $v_{i,0}$ in $G_\beta$.
We are guaranteed that $v_{i,0}$ exists because if $\alpha[i]=1$ then $\beta[i]$ must be $0$, by the orthogonality of the vectors.
It is easy to check that two neighbours in $H_\alpha$ are mapped to two neighbours in $G_\beta$.

For the other direction, assume $H_\alpha$ is isomorphic to a subgraph of $G_\beta$, and let $f$ be the mapping.
First, note that $u_0$ must be mapped to $v_0$ since these are the roots of the two trees.
Then we observe that $u_{D+2}$ must be mapped to $v_{D+2}$ and the path $u_0\to \cdots \to u_{D+2}$ must be mapped to the path $v_0 \to \cdots \to v_{D+2}$ since these are the only paths of length at least $(D+2)$ in the trees.
Now, let $i \in [D]$ be such that $\alpha[i]=1$ and note that $u_i$ must have degree $3$ in this case, which implies that $f(u_i)=v_i$ must also have degree at least $3$ in $G_\beta$, which implies that the node $v_{i,0}$ must exist, and $\beta[i]=0$.
Thus, whenever $\alpha[i]=1$ it must be the case that $\beta[i]=0$, and the vectors are orthogonal.
\end{proof}

The final step is to combine the vector gadgets into two trees $H,G$ in a way such that $H$ is isomorphic to a subtree of $G$ if and only if there is a pair of orthogonal vectors within our two lists.

To this end, we define a special vector $\gamma = \vec{0}$ to be the all-zero vector in $D$ dimensions.
By Claim~\ref{cl1}, for any vector $\beta \in \{0,1\}^D$, we have that $H_\beta$ is isomorphic to a subtree of $G_\gamma$.

We are now ready to define the trees $H$ and $G$ of size $O(ND)$.

$G$ will be composed of a root node $g$ of degree $(2N-1)$ that has $G_{\beta_j}$ as a child for every $\beta_j \in B$, in addition to $(N-1)$ distinct $G_\gamma$ gadgets.
 That is, first, for each $j \in [N]$ add the vector gadget $G_{\beta_j}$ to $G$ and add the edge $g \to v_0$ where $v_0$ is the root of $G_{\beta_j}$.
And then, we add $(N-1)$ trees $G_\gamma^{(1)},\ldots, G_\gamma^{(n-1)}$ to $G$ and for each $j \in [N-1]$ we add the edge $g\to v_0$ where $v_0$ is the root of $G_\gamma^{(j)}$.

$H$ will be constructed in a similar way, except we do not add the $\gamma$ vector gadgets. 
Create a root node $h$ of degree $N$ that has $H_{\alpha_j}$ as a child for every $\alpha_j \in A$.
As in the definition of $G$, we add edges $h \to u_0$ where $u_0$ is the root of $H_{\alpha_j}$, for every $j \in [N]$.

Before proving the correctness of the reduction, note that the size of each tree is indeed $O(ND)$ since each gadget has size $O(D)$ and we are combining $O(N)$ gadgets into our trees $H,G$.
To conclude the proof, we claim that $H$ is isomorphic to a subgraph of $G$ iff there is a pair of orthogonal vectors.

\begin{claim}
In the above reduction, $H$ is isomorphic to a subtree of $G$ iff there is a pair $\alpha \in A, \beta \in B$ of orthogonal vectors.
\end{claim}

\begin{proof}
For the first direction, assume that there is a pair of orthogonal vectors $\alpha \in A, \beta \in B$ and we will show that $H$ is isomorphic to a subtree of $G$.
Consider the mapping which maps $H_\alpha$ to $G_\beta$ as in Claim~\ref{cl1}, and then for each of the $(N-1)$ $H_{\alpha'}$ subtrees, for $\alpha' \neq \alpha$, we map it to a different $G_\gamma$ subtree of $G$.
Finally, the root $h$ is mapped to $g$.
It is easy to check that neighbours in $H$ are mapped to neighbours in $G$.

For the other direction, assume that $H$ is isomorphic to a subgraph of $G$ and let $f$ be the corresponding mapping.
We know that $f(h)=g$ and for each vector gadget $H_{\alpha_j}$ in $H$, its image using our mapping $f$ must be entirely contained in exactly one vector gadget $G_{x}$ in $G$, where $x \in B \cup \{ \gamma \}$.
Moreover, two gadgets $H_{\alpha},H_{\alpha'}$ cannot be mapped to the same gadget $G_x$.
There are $N$ $H_\alpha$ gadgets but only $(N-1)$ $G_\gamma$ gadgets, thus, by the pigeonhole principle, there must be at least one $\alpha \in A$ for which $H_\alpha$ is mapped to a gadget $G_{x}$ for $x \neq \gamma$, i.e., $x = \beta$ for some $\beta \in B$.
We conclude that there is a mapping from $H_\alpha$ to $G_\beta$ in which every two neighbours are mapped to neighbours, that is, that $H_\alpha$ is isomorphic to a subgraph of $G_\beta$, which, by Claim~\ref{cl1}, implies that $\alpha \in A,\beta \in B$ are orthogonal.
\end{proof}
\end{proof}

\paragraph{Shorter Vector Gadgets.}
Next, we show how our reductions can be implemented with trees of smaller depth, by introducing a new construction of vector gadgets.
We will use these gadgets in our final reductions that prove Theorems~\ref{thm:lb} and~\ref{thm:LCST}.
%
%
%In Theorem \ref{th:bounded} below we show reduction from Orthogonal Vectors problem to subtree isomorphism, where trees have small degree (upper bounded by arbitrary $d\geq 2$) and small height. We build on ideas developed in the proof of Theorem \ref{th:unbounded}. We start with vector gadget construction (Lemma \ref{vector_gadgets}) and then we show how to combine the vector gadgets into two trees (Theorem \ref{th:bounded}).
%

\begin{lemma}  \label{vector_gadgets}
	Given two vectors $\alpha,\beta \in \{0,1\}^D$ we can construct two binary rooted trees $H_\alpha,G_\beta$ of depth $3\log_2(D)+O(1)$ in linear time, such that $H_\alpha$ is isomorphic to a subtree of $G_\beta$ if and only if $\alpha,\beta$ are orthogonal.
\end{lemma}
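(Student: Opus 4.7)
The plan is to decompose the construction into three gadget families and glue them via a complete binary backbone of depth $\log_2 D$. The three families are: a pair $T_0,T_1$ of constant-size, pairwise incomparable rooted binary trees; a per-coordinate position tag $P_i$ built as a spine decorated at each level by $T_{b_0},T_{b_1},\ldots$ where $(b_0,\ldots,b_{k-1})$ is the binary expansion of $i$; and the two-node value gadgets from the warm-up (Theorem~\ref{th:unbounded}) that locally encode $\alpha[i]\beta[i]=0$. First I would construct $T_0$ and $T_1$ explicitly -- for instance, $T_0$ is a root whose only child has two leaf children, and $T_1$ is a root with two children each being a length-1 path -- and verify by a short case analysis on root degrees and children's subtree depths that $T_v \hookrightarrow T_w \iff v=w$ (that is, neither embeds as a rooted subtree into the other unless they are equal).

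Next, for each $i \in \{0,\ldots,D-1\}$ with bits $b_0\cdots b_{k-1}$ ($k=\lceil\log_2 D\rceil$), I would build the position tag $P_i$ as a spine $x_0 \to x_1 \to \cdots \to x_k$ of length $k$, where $x_k$ is a leaf and each $x_j$ for $j<k$ has as its second child the root of a copy of $T_{b_j}$. The key structural lemma is that $P_i \hookrightarrow P_{i'}$ iff $i=i'$. I would prove it by induction on $k$: the subtree below the spine child $x_1$ has depth $k+1$, strictly larger than the side-branch depth $2$ (for $k\ge 2$), so any rooted embedding of $P_i$ into $P_{i'}$ must send the spine child of $x_0$ to the spine child of $x_0'$ and the side branch to the side branch; the relation $T_v \hookrightarrow T_w \iff v=w$ then forces $b_0=b_0'$, and the induction reduces to the length-$(k-1)$ subproblem on the spine. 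I expect this forcing lemma to be the main technical obstacle, because in unordered trees the two children of each spine node may be freely swapped, so the argument must rely essentially on the strict depth separation between spine continuation and side branch.

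Now take the value gadgets of the warm-up: $V_0^H$ a single node, $V_1^H$ a two-node path, and $V_0^G,V_1^G$ their swaps, so that $V_v^H \hookrightarrow V_w^G \iff vw=0$. For each $i$, define the leaf decoration $L_i^H$ as a root with two children, the roots of $P_i$ and of $V_{\alpha[i]}^H$; similarly $L_i^G$ with $V_{\beta[i]}^G$. The gadget $H_\alpha$ is a complete rooted binary backbone $B$ of depth $k$ with $D$ leaves, with $L_i^H$ hung below the $i$-th backbone leaf; $G_\beta$ is built identically with $L_i^G$. The easy direction ($\alpha\perp\beta \Rightarrow H_\alpha\hookrightarrow G_\beta$) is witnessed by mapping backbone-to-backbone identically, $P_i$ to $P_i$ identically, and $V_{\alpha[i]}^H$ to $V_{\beta[i]}^G$ using $\alpha[i]\beta[i]=0$. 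For the converse, a rooted subtree embedding is depth-preserving; the only nodes of $G_\beta$ at depth at most $k$ are backbone nodes, so the backbone of $H_\alpha$ must map into the backbone of $G_\beta$, giving a permutation $f$ on backbone leaves. Then $P_i$'s root (at depth $k+2$) must map to $P_{f(i)}$'s root, since the other child of $L_{f(i)}^G$'s root is $V^G$'s root, whose subtree has depth at most $1$ and cannot host $P_i$; the position lemma then forces $f(i)=i$, and the value gadget embedding at each coordinate gives $\alpha[i]\beta[i]=0$, i.e., $\alpha \perp \beta$.

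Finally, each $P_i$ has $O(\log D)$ nodes and the backbone has $O(D)$ nodes, so each of $H_\alpha,G_\beta$ has $O(D\log D)$ nodes and can be built in time linear in its size; the depth of $H_\alpha$ is the backbone depth $k$, plus the two edges from the backbone leaf down to the root of $P_i$, plus the depth $k+2$ of $P_i$, giving $2k+4 \le 3\log_2 D + O(1)$ as required by the lemma.
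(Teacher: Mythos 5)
Your construction is correct, and it takes a genuinely different route to the key ``index/position'' gadget. The paper encodes the coordinate $i$ as a path $z_1\to\cdots\to z_l$ with optional single-leaf side branches at the positions where $\bar\imath$ has a $1$; the resulting gadget $Q_{\bar x}$ embeds into $Q_{\bar y}$ precisely when the $1$-positions of $\bar x$ form a subset of those of $\bar y$, which is only a partial order, so the paper concatenates $Q_{\bar\imath}$ with the bit-complement gadget $Q_{\bar\imath^S}$ to turn the subset relation into equality. You instead decorate a spine with copies of two constant-size, pairwise \emph{incomparable} trees $T_0,T_1$ (one per bit value) so that a single tag $P_i$ already satisfies $P_i\hookrightarrow P_{i'}\iff i=i'$; the argument that the spine of $P_i$ must map onto the spine of $P_{i'}$ (rather than into a side branch) via the strict depth separation is exactly the technical core you flagged, and your handling of the terminal spine node where that separation reverses is also sound. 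Your value encoding sits in a sibling branch of the tag rather than being appended after it as in the paper, but this is cosmetic. Both proofs are correct; yours yields a slightly shallower gadget ($2\log_2 D+O(1)$ instead of $3\log_2 D+O(1)$), which still comfortably satisfies the lemma's stated depth bound, and the two incomparable trees $T_0,T_1$ you exhibit indeed fail to embed in each other because each forces a degree-$2$ node to map onto a degree-$1$ node.
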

\begin{proof}
Our constructions will involve careful combinations of ``index gadgets", which are defined as follows.
For a sequence of $\ell$ binary values $b_1, b_2, \ldots, b_l$, we define a tree ``index gadget" $Q_{b_1, b_2, \ldots, b_l}$ (think of $\ell$ as being $\lceil \log_2(D+1)\rceil$ and think of $b_1, b_2, \ldots, b_l$ as bits representing an index in $[D]$) to be composed of a path $z_1 \to z_2 \to ... \to z_l$ of length $l$, in which $z_1$ is the root, and for all $i\in[l]$ we attach a child $z_{i,1}$ to $z_i$ if and only if $b_i=1$.
That is, our index gadget $Q_{b_1, b_2, \ldots, b_l}$ is representing the index in the natural way: the edge $z_i \to z_{i,1}$ will exist if and only if $b_i=1$.

	Our first ``vector gadget" $H_\alpha$ is constructed as follows. 
	First, we build a complete binary tree with $D$ leaves $u_1, u_2, \ldots, u_D$ where the subtree at each leaf $u_i$ will encode the entry $\alpha[i]$ using our ``index gadgets". 
	We assume that every index $i\in [D]$ can be represented by $l=\lceil \log_2(D+1)\rceil$ bits and we let ${\bar i}$ denote this representation and let ${\bar i}^S$ denote the binary sequence obtained by flipping each bit of ${\bar i}$.
	For each node $u_i$ we will attach three gadgets, one after the other: first we will attach the $Q_{\bar i}$ index gadget, then we follow it by the $Q_{{\bar i}^S}$ index gadget, and finally we append a path of length either $2$ or $3$ -- depending on $\alpha[i]$.
	The necessity of this complicated encoding will become clear in the proof of correctness below.
	More formally, we first attach $u_i \to Q_{\bar i}$, then we let $z_l'$ denote the node of $Q_{\bar i}$ corresponding to $z_l$ in the above construction (i.e. the last node on the path), and attach $z_l' \to Q_{{\bar i}^S}$.
	Then, similarly, we let $z_l''$ be the node of $Q_{{\bar i}^S}$ which corresponds to $z_l$ in the above construction (i.e. the last node on the path), and we either attach three nodes $z_l''\to a_i \to b_i \to c_i$ if $\alpha[i]=1$, or we attach only two nodes $z_l'' \to a_i \to b_i$.
	
	The second ``vector gadget" $G_\beta$ is constructed in the same way except that we attach a path of length $3$ if $\beta[i]=0$ (as opposed to $1$) and attach a path of length $2$ if $\beta[i]=1$. 
	By construction, the depth of both trees is $3\log_2(D)+O(1)$ as claimed.

	To complete the proof we show that $H_\alpha$ is isomorphic to a subtree of $G_\beta$ iff $\alpha \cdot \beta=0$. 
	The first direction is easy: if the vectors are orthogonal then the natural mapping from $H_\alpha$ to $G_\beta$ that follows from our construction shows the isomorphism: map the binary trees on top to each other so that the $u_i$'s are mapped to each other, then map the attached $Q_{\bar i}\to Q_{{\bar i}^S}$ subtrees to each other, and finally, we can map the paths $a_i \to b_i \to c_i$ (if $\alpha[i]=1$) or    $a_i \to b_i $ (if it is $0$) to each other since in the first case $\beta[i]$ must be zero and $c_i$ will also exist in $G_\beta$.
	
 It remains to show that if $H_\alpha$ is isomorphic to a subtree of $G_\beta$, then $\alpha \cdot \beta=0$. Our index gadgets $Q_{\bar i}$ and $Q_{{\bar i}^S}$ will play a crucial role in this part, as they will show that in any mapping between the leaves of the complete tree we must map $u_i$ in $H_\alpha$ to $u_i$ in $G_\beta$ or else the index gadgets will not map into each other properly.
 We claim that for any two indices $i,j \in [D]$ we have that $i=j$ if and only if both $Q_{\bar i}$ is contained in $Q_{\bar j}$ \emph{and} $Q_{{\bar i}^S}$ is contained in $Q_{{\bar j}^S}$.
  This is true because of the following observation: $Q_{\bar x}$ is isomorphic to a subtree of $Q_{\bar y}$ iff the set of positions in $\bar x$ with $1$ is a subset of the set of positions of $\bar y$ with $1$.
  Therefore, any mapping from $H_\alpha$ to a subtree of $G_\beta$ must map the path representing $\alpha[i]$ to the path representing $\beta[i]$, for all $i \in [D]$. By construction, this can only happen if $\alpha\cdot \beta =0$.
  \end{proof}

\paragraph{Constant Degree Trees.}
Perhaps the most challenging element towards the proof of Theorem~\ref{thm:lb} is the combination of all the vector gadgets into two big trees, \emph{without using large degrees}.

To see the difficulty, recall the reduction in the proof of Theorem~\ref{th:unbounded}: in both trees, we added all $X$ vector gadgets as children of a root of degree $X$. 
By doing so we have essentially allowed the isomorphism to pick \emph{any} matching between the gadgets.
Combined with the auxiliary gadgets that we added, this allowed us to show that the final two trees are a ``yes" instance of Subtree Isomorphism if and only if the original vectors contained an orthogonal pair.
However, when the trees have constant degree (say, binary) it is much harder to combine the vector gadgets into two trees such that any matching between the gadgets can be chosen by the isomorphism.
A natural approach would be to add the gadgets at the leaves of a complete binary tree. 
One reason this does not work is that any isomorphism must map the first and second gadgets to adjacent gadgets in the second tree -- that is, only special kinds of matchings can be ``implemented".

We overcome this difficulty with a two-level construction that allows the isomorphism to pick exactly one gadget from each of the two trees and ``match" them, while all the other gadgets do not affect the outcome.  

\begin{theorem} \label{th:bounded}
	Given sets of vectors $A,B$, we can construct two rooted trees $H=H(A)$ and $G=G(B)$ such that the following properties hold. 
	\begin{enumerate}
		\item The number of nodes in both trees and the construction time is upper bounded by $O(ND)$. 
		\item The degree of both trees is upper bounded by $d$. 
		\item The depth of both trees is upper bounded by $2\log_d(N)+O(\log D)$.
		\item $H$ is isomorphic to a subtree of $G$ iff there are $\alpha \in A$ and $\beta \in B$ with $\alpha \cdot \beta=0$.
	\end{enumerate}
\end{theorem}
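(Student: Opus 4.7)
The plan is to extend the reduction of Theorem~\ref{th:unbounded} by replacing the single high-degree root with a bounded-degree ``skeleton''. The vector gadgets themselves will be the $H_\alpha, G_\beta$ (and the universal absorber $G_\gamma$ for $\gamma = \vec 0$) from Lemma~\ref{vector_gadgets}, which have size $O(D)$ and depth $O(\log D)$. The goal is to assemble the $N$ vector gadgets into trees $H$ and $G$ of degree at most $d$, size $O(ND)$, and depth $2\log_d N + O(\log D)$, while preserving the pigeonhole argument that any embedding of $H$ into $G$ is forced to match at least one $H_{\alpha_i}$ to a genuine $G_{\beta_j}$ rather than only to $G_\gamma$ absorbers.

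Concretely, I would pursue a \emph{two-level skeleton} design. The outer skeleton is a $d$-ary tree of depth roughly $\log_d N$ with $N$ attachment points. In $H$, the $i$-th attachment point carries the gadget $H_{\alpha_i}$. In $G$, each attachment point carries a second-level $d$-ary ``selector'' subtree whose leaves hold vector gadgets, designed so that the incoming $H_{\alpha_i}$ can be matched flexibly against several possible $G_{\beta_j}$ or $G_\gamma$ leaves. The total depth then breaks up as $\log_d N$ (outer skeleton) $+\, \log_d N$ (inner selector) $+\, O(\log D)$ (vector gadget) $= 2\log_d N + O(\log D)$, matching the claim. Meeting the $O(ND)$ size bound forces the $G_\beta$ gadgets to be distributed sparsely across $G$: the total count of $G_\beta$ copies should be $O(N)$, not $O(N^2)$, and padding is done with $G_\gamma$ gadgets and single-node fillers wherever possible.

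The main obstacle, as the authors flag in the paragraph before the theorem, is designing this selector so that two properties hold at once: (a)~any $H_{\alpha_i}$ can be routed to any $G_{\beta_j}$ by some valid embedding of the outer skeleton, and (b)~in \emph{every} embedding, by pigeonhole, at least one $H_\alpha$ must land on a genuine $G_\beta$-bearing leaf. The tension is that full flexibility seems to require many $G_\gamma$ absorbers, but too many absorbers would let every $H_\alpha$ escape into a $G_\gamma$ and destroy the pigeonhole count. I would look for a construction in which the outer skeleton of $G$ has exactly one ``extra'' padding branch per level (so degree stays at most $d$), and the selector is shaped so that the $N$ vector-gadget slots of $H$ must collectively cover strictly more than the available $G_\gamma$-slots of $G$ --- forcing at least one real $G_\beta$ to be hit. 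Calibrating the number of real vs.\ padding slots while respecting the size and depth budgets is the technical core.

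Once the trees are built, correctness would follow the template of Theorem~\ref{th:unbounded}. For the ``if'' direction, given $\alpha_i \cdot \beta_j = 0$, I would write down an explicit embedding: map $H_{\alpha_i}$ into the corresponding $G_{\beta_j}$ slot using Claim~\ref{cl1}, and absorb each remaining $H_{\alpha_k}$ ($k\ne i$) into a $G_\gamma$ slot. For the ``only if'' direction, I would show that any embedding of $H$ into $G$ must respect the outer skeleton up to a controlled set of choices, inducing a matching of the $N$ $H_\alpha$-slots of $H$ into vector-gadget slots of $G$; pigeonhole then yields some $H_{\alpha_i}$ embedded into some $G_{\beta_j}$, and Claim~\ref{cl1} gives $\alpha_i \cdot \beta_j = 0$. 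The size and depth bookkeeping, and verifying degree $\le d$ at every node (especially where skeleton meets selector and where selector meets vector gadget), will be routine once the shapes are fixed.
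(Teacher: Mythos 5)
Your high-level plan captures the right ideas: replace the high-degree root by a $d$-ary skeleton, aim for a $2\log_d N + O(\log D)$ depth split, and win by pigeonhole. But the construction you sketch has two concrete problems that the paper's design is specifically engineered to avoid, and you have not yet found that design.

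First, there is a shape mismatch. You attach $H_{\alpha_i}$ directly to the $i$-th leaf of $H$'s outer $d$-ary tree, so $H$ has depth only $\log_d N + O(\log D)$. Meanwhile every leaf of $G$'s outer tree carries a second-level selector of depth $\log_d N$ before any gadget appears. When the outer trees of $H$ and $G$ are matched (which any embedding forces, root to root), each $H_{\alpha_i}$ must embed into the top of a selector subtree --- a bare $d$-ary tree --- rather than into a vector gadget, so the gadget machinery from Lemma~\ref{vector_gadgets} never even comes into play. Second, your $G$ has $N$ selector subtrees, each a $d$-ary tree on $\Theta(N)$ nodes, so $G$ already has $\Theta(N^2)$ nodes before a single gadget is placed; ``single-node fillers'' cannot fix that, since the skeleton itself is too large. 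You flag the size issue but do not resolve it, and you do not address the shape issue at all.

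The paper's construction is \emph{asymmetric} in a way that solves both problems at once. $H$ is an outer $d$-ary tree on $N$ leaves, and from the $i$-th leaf hangs a bare \emph{path} of length about $\log_d N$, with $H_{\alpha_i}$ at its end. $G$ is also an outer $d$-ary tree on $N$ leaves $v_1,\ldots,v_N$, but from $v_1,\ldots,v_{N-1}$ hangs the \emph{same-shaped} path ending in the universal absorber $G_\gamma$, while from $v_N$ alone hangs a \emph{second} complete $d$-ary tree on $N$ leaves, with $G_{\beta_1},\ldots,G_{\beta_N}$ at its leaves. The path length in $H$ is calibrated to equal the depth of this second tree, so a path starting at $v_N$ can walk straight down to any $G_{\beta_j}$. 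There is only one second-level tree, so $|G| = O(ND)$; there are $N$ $H$-branches but only $N-1$ $G_\gamma$-branches, so pigeonhole forces some $u_i$ onto $v_N$ and hence some $H_{\alpha_i}$ into some $G_{\beta_j}$; and the shapes match because the $H$-path fits the $G_\gamma$-path exactly or threads through the second tree. That asymmetric, path-versus-tree trick is the missing idea in your sketch, and without it the ``calibration'' you defer to is not just technical bookkeeping --- it is the whole construction.
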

\begin{proof}
Let $\{H_\alpha\}_{\alpha \in A}=\{H_{\alpha_i}\}_{i \in [N]}$ and $\{G_\beta\}_{\beta \in B}=\{G_{\beta_i}\}_{i \in [N]}$ be the two sets of vector gadgets corresponding to the vectors of $A$ and $B$ that are obtained by the construction in Lemma~\ref{vector_gadgets}. 
We will now combine these vector gadgets into two big trees $H$ and $G$, which will be constructed quite differently from each other.

Assume that $\log_d(N)$ is an integer, otherwise add dummy vectors to increase $N$. 
The first tree $H$ will be composed of a complete $d$-ary tree with $N$ leaves $u_1, u_2, \ldots u_N$, followed by a path of length $\log_d(N)+1$, followed by the vector gadgets $H_{\alpha_i}$. 
More formally, for every $i \in [N]$ we add: $$ u_i \to h_{i,1} \to h_{i,2} \to \ldots \to h_{i,\log_d(N)+1} \to H_{\alpha_i}.$$

To construct the second tree $G$ we need to construct vector gadgets $G_\gamma$ corresponding to the all-zero vector $\gamma=\vec{0}$ of length $D$.
As before, we start with a complete $d$-ary tree with $N$ leaves $v_1, v_2, \ldots v_N$ and attach a path of length $\log_d(N)+1$ to each leaf, except for $v_N$ which will be treated differently. Then, we attach a copy of $G_\gamma$ at the end of each one of these paths, that is $N-1$ copies in total.
Formally, for every $i=1,\ldots,N-1$ we add:
	$$
		v_i \to h_{i,1} \to h_{i,2} \to \ldots \to h_{i,\log_d(N)+1} \to G_\gamma.
	$$
	Note that none of the vectors in the second list are encoded in this part of $G$ and they will appear now in the subtree rooted at $v_N$ which we construct next.
	Rooted at $v_N$, we add another complete $d$-ary tree with $N$ leaves $v_1', v_2', \ldots v_N'$, and then attach the vector gadgets right after these leaves. That is, for every $i \in [N]$ we add:  $v_i' \to G_{\beta_i}$.
	 
	 This finishes the construction of $H$ and $G$ and the first two properties are immediate. 
	 The third property follows from Lemma \ref{vector_gadgets}, and we now turn to proving the fourth property which is the correctness of our construction.

	 	\begin{claim}
		There is a pair of vectors $\alpha \in A$ and $\beta \in B$ with $\alpha\cdot \beta =0$ if and only if  $H$ is isomorphic to a subtree of $G$.
	\end{claim}
\begin{proof}
For the first direction, let $\alpha_i$ and $\beta_j$ be a pair of orthogonal vectors and we will show that $H$ is contained in $G$. 
First, consider the rearrangement of $H$ so that the rightmost leaf of the complete $d$-ary tree (where $u_N$ used to be) is $u_i$, the node to which the vector gadget $H_{\alpha_i}$ is attached.
We claim that all vector gadgets in $H$ can now be properly mapped to subtrees of $G$, without rearranging the $v_i$ nodes in $G$.
To see this, first note that all vector gadgets $H_{\alpha_{x}}$ for $x \neq i$ will be paired up with the $G_\gamma$ vector gadgets, and by Lemma \ref{vector_gadgets} and the fact that $\gamma$ is orthogonal to any vector, we know that there is a proper mapping.
Then, it remains to show that the subtree of $H$ rooted at $u_i$ is contained in the subtree of $G$ rooted at $v_N$, which follows because we can map the vector gadget $H_{\alpha_i}$ to the vector gadget $G_{\beta_j}$ since $\alpha_i\cdot\beta_j=0$.

%Because $T_\gamma$ is vector gadget corresponding to a vector consisting only of $0$s, we have that for all $i=1,2,\ldots N-1$, by Lemma \ref{vector_gadgets}, subtree of $P$ rooted at $u_i$ is isomorphic to a subtree of $T$ rooted at $v_i$. It suffices to show that subtree of $P$ rooted at $u_N$ is isomorphic to a subtree of $T$ rooted at $v_n$. But this follows from Lemma \ref{vector_gadgets} and the fact that the number of nodes on every path from the root to a leaf in a complete $d$-ary tree with $N$ leaves is $\log_d(N)+1$.

For the second direction, assume that there is a mapping from $H$ to a subtree of $G$ and we will show that there must exist a pair of orthogonal vectors.
First, note that under this mapping, there is some $i \in [N]$ such that $u_i$ is mapped to $v_N$.
By construction of the subtree rooted at $v_N$, this means that the vector gadget $H_{\alpha_i}$ must be mapped into one of the vector gadgets $G_{\beta_j}$ for some $j \in [N]$, and not into $G_\gamma$.
By Lemma~\ref{vector_gadgets}, this can only happen if $\alpha_i \cdot \beta_j=0$.
	\end{proof}
\end{proof}

Theorem~\ref{th:bounded} and the connection between SETH and OV of Williams~\cite{W04} imply Theorem~\ref{thm:lb} from the introduction.

\subsection{Hardness for Largest Common Subtree}

%In this section we prove that improving the $O(n^2)$ algorithm of Akutsu et al. \cite{Aku+} for the Largest Common Subtree problem on \emph{binary} trees is impossible under SETH.

%again, we will focus on the unrooted case.

%define intuitively, then add formal definition with f and stuff.

Next, we prove a lower bound for the Largest Common Subtree (LCST) problem, which is a generalization of Subtree Isomorphism.
Although the reductions above already imply a quadratic lower bound for LCST, we will now optimize these reductions and prove a stronger hardness result: we will show that even on binary trees of depth $(1+o(1))\log{n}$ the LCST cannot be computed in truly subquadratic time.
This will show an interesting gap between LCST and Subtree Isomorphism, since the latter can be solved in truly subquadratic time on such trees - we present such upper bounds in Section~\ref{sec:UB}.
Our strengthened hardness result gives an explanation for why we are not able to extend our upper bounds to LCST: such extensions would refute SETH.
The next theorem implies Theorem~\ref{thm:LCST} from the introduction.

\begin{theorem}
If for some $\eps>0$, the Largest Common Subtree problem on two trees size $n$ can be solved in $O(n^{2-\eps})$ time, then Orthogonal Vectors on $N$ vectors in $\{0,1\}^D$ can be solved in $O(N^{2-\eps}\cdot D^{O(1)})$ time. The trees produced in the reduction from the Orthogonal Vectors problem have degree $d$ and height at most $\log_d(N)+O(\log D)$ for arbitrary $d \geq 2$.
\end{theorem}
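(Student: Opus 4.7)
The plan is to adapt the Subtree Isomorphism reduction of Theorem~\ref{th:bounded}, using the extra flexibility of the Largest Common Subtree (LCST) problem to halve the depth. I will reuse the vector gadgets $H_\alpha$ and $G_\beta$ from Lemma~\ref{vector_gadgets}. After a standard padding step making every $\alpha \in A$ of Hamming weight $a^\ast$ and every $\beta \in B$ of Hamming weight $b^\ast$, with $a^\ast + b^\ast \leq D$, all $H_\alpha$ have the same size $S$ and $|H_\alpha| \leq |G_\beta|$ for all pairs. Lemma~\ref{vector_gadgets} then immediately gives $\operatorname{LCST}(H_\alpha, G_\beta) = S$ iff $\alpha \cdot \beta = 0$, and $\operatorname{LCST}(H_\alpha, G_\beta) \leq S - 1$ otherwise; for the all-zero vector $\gamma$, $\operatorname{LCST}(H_\alpha, G_\gamma) = S$ always.

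For the combining step, let $H$ be the complete $d$-ary tree on $N$ leaves with $H_{\alpha_i}$ attached below the $i$-th leaf. Let $G$ be the complete $d$-ary tree on $dN$ leaves; partition $B$ into $d$ equal parts of $N/d$ vectors, place each part below one of the $d$ subtrees rooted at a child of $G$'s root, and pad each such subtree up to $N$ leaves with $G_\gamma$ copies. Both trees then have $O(N \cdot D^{O(1)})$ vertices and depth $\log_d N + O(\log D)$.

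To analyze $\operatorname{LCST}(H, G)$, observe via the standard rooted-LCST dynamic program that the maximum common subtree $T$ is realized by mapping $T$'s root to $H$'s root and to a child of $G$'s root. Then $T$'s top is the complete $d$-ary tree on $N$ leaves, embedded naturally in both, and below each $T$-leaf sits the LCST of the paired gadgets. Since the automorphism group of a complete $d$-ary tree acts transitively on its leaves, the composite automorphism can realize any leaf-to-leaf bijection within the chosen $G$-subtree. If an orthogonal pair $(\alpha_{i^\ast}, \beta_{j^\ast})$ exists, we choose the $G$-subtree containing $\beta_{j^\ast}$ and align $\alpha_{i^\ast}$ with $\beta_{j^\ast}$; the sum of gadget-LCSTs is at least $S + (N/d - 1)(S - 1) + (N - N/d)\,S = NS - N/d + 1$. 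If no orthogonal pair exists, every $\alpha\leftrightarrow\beta$ pair loses at least one in its gadget-LCST, so the sum is at most $(N/d)(S - 1) + (N - N/d)\,S = NS - N/d$. Hence a single LCST query decides OV via the threshold $NS - N/d + 1$ (plus a fixed top overhead).

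The main technical obstacle is arguing that no alternative root placement $(u, v)$ deeper in $H$ or higher in $G$ produces a strictly larger common subtree in which the tight gap disappears. I will rule this out by showing that any common subtree can be re-embedded at the intended placement without losing size, using the index gadgets of Lemma~\ref{vector_gadgets} to prevent distinct positions within a single vector gadget from being collapsed, and noting that any placement dropping the number of paired leaves loses far more than the unit gap. With $n = O(N\cdot D^{O(1)})$, an $O(n^{2-\varepsilon})$ LCST algorithm then yields the desired $O(N^{2-\varepsilon}\cdot D^{O(1)})$ algorithm for Orthogonal Vectors, completing the reduction.
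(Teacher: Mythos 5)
Your proposal deviates from the paper's approach in two ways, and both deviations open genuine gaps.

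First, your lower-bound computation in the orthogonal case is not justified. You pick the $G$-subtree containing $\beta_{j^\ast}$, match $\alpha_{i^\ast}$ to it, and then sum ``$S + (N/d-1)(S-1) + (N-N/d)S$.'' But the middle term assumes that each of the $N/d-1$ remaining $H_\alpha$'s that are forced onto non-orthogonal $G_\beta$'s contributes at least $S-1$. With the bare gadgets of Lemma~\ref{vector_gadgets}, there is no such lower bound: if $\alpha$ and $\beta$ collide in $k$ coordinates (i.e., $\alpha[i]=\beta[i]=1$ for $k$ values of $i$), then $\operatorname{LCST}(H_\alpha,G_\beta)$ drops to roughly $S-k$, and $k$ can be as large as the Hamming weight of $\alpha$. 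Your achievable lower bound is therefore only about $S + (N-N/d)S$, which for $d=O(1)$ does not beat your claimed upper bound of $NS-N/d$. This is precisely the obstacle the paper designs its modified gadget to remove: the paper's $G_\beta$ is a fresh root $r$ with \emph{two} children, $G'_\delta$ (which any $H'_{\alpha'}$ with first coordinate $0$ embeds into, giving a guaranteed fallback score of $E'-1$) and $G'_\beta$ (which gives the extra unit iff $\alpha\cdot\beta=0$). This, together with the ``all $\alpha$ start with $1$, all $\beta$ start with $0$'' normalization and the padding to equal Hamming weight, pins $\operatorname{LCST}(H_\alpha,G_\beta)$ to exactly $E$ or $E'=E+1$ (Lemma~\ref{cl:LCSTmain}), so that \emph{every} pairing contributes either $E$ or $E+1$, and the global gap of $1$ goes through.

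Second, you flag the root-placement issue as ``the main technical obstacle'' but leave it as a sketch, and your construction is specifically shaped to create it: your $G$ has a top tree on $dN$ leaves, one level deeper than $H$'s top tree on $N$ leaves, so the intended embedding sends $H$'s root to a \emph{child} of $G$'s root. If the LCST in question is the rooted version (as the paper uses, and as the appendix algorithm~\ref{MCS} computes), the root of the common subtree must map to the roots of both $H$ and $G$, so your intended placement is simply disallowed; if you instead mean unrooted LCST, you take on a real proof obligation that is not discharged. The paper avoids this entirely by making $H$ and $G$ share the same top complete $d$-ary tree on $N$ leaves, so the root-to-root mapping is forced and the permutation of the $h_i$'s to $g_{\pi(i)}$'s is the only degree of freedom.

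In short, reusing Lemma~\ref{vector_gadgets} unmodified is not sufficient for LCST; the paper's extra $G'_\delta$ branch inside each $\beta$-gadget (and the accompanying normalizations) is the key ingredient your proposal is missing, and it also lets the paper keep $H$ and $G$ structurally identical at the top so the root-placement problem never arises.
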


\begin{proof}
We note that the construction provided in Theorem \ref{th:bounded} is not sufficient for our purposes because the height of the produced trees is $2\log_d(N)+O(\log D)$, which is larger than what we want. 
We will use the more expressive nature of LCST to implement our reduction with smaller height.
%
%Otherwise, Theorem \ref{th:bounded} would be sufficient for our purposes - if there are two orthogonal vectors, the Largest Common Subtree has size equal to the number of vertices in $P$ and otherwise it has strictly fewer number of vertices.

%\iffalse
%As in previous reductions, the basic idea is to construct vector gadgets and combine them somehow into two trees in a way that the existence of an orthogonal pair determines the answer.
%This case, however, is more challenging for two reasons.
%First, as opposed to the reduction in Lemma~\ref{lem:first}, here the trees are binary, and the solution is not free to re-order the vector gadgets arbitrarily so that the orthogonal pair is ``matched" together even though the ``adjacent" vector gadgets are not orthogonal. 
%And second, this is an optimization problem and the objective function will make the solution try to maximize the sum of ``matches" from all vector gadgets in our construction. 
%In other words, the score is not determined by a single pair of gadgets.
%\fi

%To overcome these challenges, 
To achieve smaller height, we will try to implement vector gadgets such that the largest common subtree of two gadgets would be of a certain fixed size $E$ if the vectors are not orthogonal, while it will be of a larger size $E'>E$ if the vectors are orthogonal.
This trick was introduced by Backurs and Indyk in their reduction to Edit-Distance \cite{BI15} and later used in the reductions to LCS \cite{ABV15}.
Here, we carefully implement such gadgets with degree $d$ trees of small height instead of sequences. WLOG, we can assume that all vectors in $A$ start with $1$ and all vectors in $B$ start with $0$. If it is not so, we can add an extra coordinate at the beginning of every vector and set the entry accordingly. This does not change the answer to the problem (whether there are two orthogonal vectors). Also, we assume that all vectors in $A$ have the same number of entries equal to $1$. If it is not so, we can subdivide the set $A$ into smaller sets so that every set contain vectors with the same number of entries equal to $1$. Then we run the reduction on every subset of $A$ and $B$. This increase the runtime to solve the Orthogonal Vectors problem by a factor of $D+1$ but we are fine with that.

For each vector in the first list, $\alpha \in A$, we construct a vector gadget $H_\alpha$ as follows. Let $H_\alpha'$ be the vector gadget constructed in Lemma \ref{vector_gadgets} corresponding to vector $\alpha \in A$. Then $H_\alpha$ is equal to $r \to root(H_\alpha')$ for some vertex $r$, which is the root of $H_\alpha$.
\iffalse
Let $u_0$ be the root and create two paths that start at $u_0$. 
First, add a path of length $3d$ and connect $u_0$ to it, i.e. $u_0 \to p_1 \to \cdots \to p_{3d}$.
Then, add a path of length $2d+2$ and connect $u_0$ to it, i.e. $u_0 \to u_1 \to \cdots \to u_{2d+1}$.
Finally, for each $i \in [d]$, if $\alpha[i]=0$ then we add two nodes $u_{2(i-1)+1,0}$ and $u_{2i,1}$ and add the edges $u_{2(i-1)+1} \to u_{2(i-1)+1,0}$ and $u_{2i} \to u_{2i,1}$, while if $\alpha[i]=1$ we add only one node $u_{2(i-1)+1,0}$ and connect it with the edge $u_{2(i-1)+1}\to u_{2(i-1)+1,0}$.
\fi

For each vector in the second list, $\beta \in B$, we construct a vector gadget $G_\beta$ as follows. Let $\delta$ be a vector with $D$ coordinates. The first entry is equal to $1$ and the rest of entries are equal to $0$. Let $G_\beta'$ be the vector gadget constructed in Lemma \ref{vector_gadgets} corresponding to vector $\beta \in B$. Then we obtain $G_\beta$ by choosing a vertex $r$ to be its root and adding $r \to G_\delta'$ and $r \to G_\beta'$.
\iffalse
First, create a path of length $(4d+2)$  $v_0 \to \cdots \to v_{4d+1}$ so that $v_0$ is the root of $H_\beta$.
Then, for each $i \in [d]$, if $\beta[i]=0$ then we add a node $v_{2(i-1)+1,0}$ and connect it with an edge from $v_{2(i-1)+1}$, while if $\beta[i]=1$ then we add a node $v_{2i,1}$ and connect it with an edge from $v_{2i}$.
\fi

%The main idea behind this construction is that when matching $T_\alpha$ and $P_\beta$ one has a choice: either the $p$-path of length $3d-1$ is taken, or the $u$-path is taken and the score then depends on the orthogonality of the vectors.
The main idea behind this construction is that, when matching $H_\alpha$ and $G_\beta$, one has a choice: either match $H'_\alpha$ to $G'_\delta$ (giving a fixed score, independent of $\alpha$), or match it to $G'_\beta$ (and the score then depends on the orthogonality of $\alpha$ and $\beta$.)
%If one tries to match the $u$-path, then the score will be exactly $2d$ plus the number of coordinates in which $\alpha[i]\cdot\beta[i]=0$.
We make this argument formal in the next lemma.
Let $E'$ denote the size of $H_\alpha$ for $\alpha \in A$, which is independent of $\alpha$ since all vectors in $A$ contain the same number of $1$'s.
Let $E=E'-1$.

\begin{lemma}
\label{cl:LCSTmain}
The largest common subtree of $H_\alpha$ and $G_\beta$ is of size $E'=|H_\alpha|$ if $\alpha,\beta$ are orthogonal and it is of size $E=E'-1$ otherwise. We have that the size of $H_\alpha$ and $H_{\alpha'}$ are equal $|H_\alpha|=|H_{\alpha'}|$ for all $\alpha,\alpha' \in A$.
\end{lemma}
\begin{proof}
First, if $\alpha,\beta$ are orthogonal, then by Lemma~\ref{vector_gadgets} we have that $H_\alpha$ is isomorphic to a subgraph of $G_\beta$ and the LCST has size $E'$.

%\iffalse
%The subgraphs will be defined from the following mapping.
%First, map the path $v_0 \to \cdots \to v_{2d+1}$ in $P_\beta$ to the path $u_0 \to u_{2d+1}$ in $T_\alpha$.
%Then, for each $i \in [d]$ do the following.
%If $\beta[i]=0$, we map $v_{2(i-1)+1,0}$ to $u_{2(i-1)+1,0}$, which exists for every $i \in [d]$ by construction.
%If $\beta[i]=1$, we map $v_{2i,1}$ to $u_{2i,1}$, which exists, by construction and since we are assuming that $\alpha,\beta$ are orthogonal and therefore $\beta[i]=1 \Rightarrow \alpha[i]=0$.
%It is easy to verify that neighbors are mapped to neighbors in our mapping and that the number of nodes we have mapped is exactly $(2d+2)+d$. 
%\fi

For the second case, assume that $\alpha,\beta$ are not orthogonal.
We first remark that there is a common subtree of size $E'-1$: Let $\alpha'$ denote $\alpha$ where we set the first coordinate of $\alpha$ (which is equal to $1$) to $0$, then $H'_{\alpha'}$ is a subtree of $H'_{\alpha}$ of size $|H'_{\alpha'}|=E'-1$, and by Lemma \ref{vector_gadgets}, it is also a subtree of $G'_{\delta}$ because $\alpha' \cdot \delta=0$. 
It remains to show that we cannot map the entire tree $H_{\alpha}$ to a subtree of $G_{\beta}$, which follows because $H'_{\alpha}$ is neither isomorphic to a subtree of $G'_{\delta}$ (since $\alpha \cdot \delta=1$) nor to a subtree of $G'_{\beta}$ (since $\alpha \cdot \beta\neq 0$).
\end{proof}

%\xxx{continue from here!!!!!!!!!!!!!!!!!!}

We are now ready to present the final trees $H,G$.
%Assume that $n$, the number of vectors in our lists, is a power of two - if this is not the case we can add dummy all-one vectors that will not affect the answer to Orthogonal Vectors.
%To construct $G$, take a complete binary tree, rooted at node $g$, with $n$ leaves $g_1,\ldots,g_n$, and for each $j \in [n]$ connect $g_j$ to the vector gadget tree $T_{\alpha_j}$ corresponding to the vector $\alpha_j$, as in the above construction.
%This is done by adding the edge $g_j \to root(T_{\alpha_j})$.
%Similarly, to construct $H$, take a complete binary tree, rooted at a node $h$, with $n$ leaves $h_1,\ldots,h_n$, and for each $j \in [n]$ connect $h_j$ to the root of the vector gadget $P_{\beta_j}$.
We construct $H$ as follows. First, we build a complete $d$-ary tree with $N$ leaves $h_1,\ldots,h_N$ at the lowest level. For every $j \in [N]$, we add $h_j \to H_{\alpha_j}$, where $A=\{\alpha_1, \ldots, \alpha_N\}$. Similarly we construct $G$. Take a complete $d$-ary tree with leaves $g_1, \ldots, g_N$ at the lower level. For every $j \in [N]$, we add $g_j \to G_{\beta_j}$, where $B=\{\beta_1, \ldots, \beta_N\}$.

\begin{theorem}
The Largest Common Subtree of $H$ and $G$ is of size at most $(2N-1)+(N \cdot E)$ if there is no pair of orthogonal vectors, and is at least $(2N-1)+(N \cdot E+1)$ otherwise.
\end{theorem}
\begin{proof}
We must map the nodes $h_i$ for every $i\in [N]$ to nodes $g_{\pi(i)}$, for some permutation $\pi:[N]\to[N]$.
Notice, however, that $\pi$ cannot be an arbitrary permutation since, e.g. $\pi(1)=\pi(2) \pm 1$ (the permutation must be implemented by swapping children in a complete binary tree.)

On the one hand, the total size of the common subtree can be upper bounded by the size of a complete binary tree with $N$ leaves, plus $\sum_{i=1}^N LCST(H_{\alpha_i}, G_{\beta_{\pi(i)}})$, for an arbitrary permutation $\pi$.
If there is no pair of orthogonal vectors, then by Lemma~\ref{cl:LCSTmain},  the latter sum is exactly $N \cdot E$, and the total size is bounded by $(2N-1)+N \cdot E$.

On the other hand, if there is an orthogonal pair $\alpha_i, \beta_j$, we can take any mapping in which $h_i$ is mapped to $g_j$ while the other $h_{x}$'s are mapped arbitrarily to different $g_{y}$'s.
This induces some permutation $\pi:[N]\to[N]$ so that $h_x$ is mapped to $g_{\pi(x)}$.
Since $\alpha_i\cdot\beta_j=0$, Lemma~\ref{cl:LCSTmain} implies that
this mapping can be completed to a mapping of score 
$$
(2N-1) + \sum_{v=1}^N LCST(H_{\alpha_v}, G_{\beta_{\pi(v)}}) ~\geq 
(2N-1)+(N-1)\cdot E + (E+1) ~=
(2N-1)+(N \cdot E+1)~.
$$
\end{proof}
\end{proof}

\section{Algorithms}
\label{sec:UB}
In this section we present new algorithms for Subtree Isomorphism on rooted trees with vertices of bounded degree. Edmonds and Matula independently described a procedure for reducing the rooted Subtree Isomorphism problem to a polynomially bounded collection of recursively smaller Subtree Isomorphism problems, and how to combine the answers by solving a maximum bipartite matching problem (see \cite{Matula1978}). We follow the same approach but focus on the case where the degrees are bounded by a constant.

Given two rooted trees $H$ and $G$, we want to decide whether $H$ is isomorphic to a subtree of $G$ where the root of $H$ maps to the root of $G$. Let $H_1,H_2,\ldots,H_k$ and $G_1,G_2,\ldots,G_\ell$ be the subtrees of $H$ and $G$, respectively, with roots that are children of the root of $H$ and the root of $G$. Let $\mathcal{G}$ be a bipartite graph with vertex set $\mathcal{V} = \{u_1,\ldots,u_k\} \cup \{v_1,\ldots,v_\ell\}$, and let $(u_i,v_j)$ be an edge of $\mathcal{G}$ if and only if $H_i$ is isomorphic to a subtree of $G_j$. Then $H$ is isomorphic to a subtree of $G$ if and only if $\mathcal{G}$ contains a matching of size $k$. The Edmonds-Matula procedure constructs the graph $\mathcal{G}$ by recursion and then solves the maximum bipartite matching problem on $\mathcal{G}$.

Designing similar algorithms for rooted Subtree Isomorphism thus involves two challenges: constructing $\mathcal{G}$ and solving the maximum bipartite matching problem on $\mathcal{G}$. The currently fastest randomized algorithm for the maximum bipartite matching problem is due to Mucha and Sankowski \cite{MS04} and runs in expected time $O((k+\ell)^\omega)$, where $\omega < 2.373$ is the matrix multiplication exponent. Improving this algorithm is itself a challenging open problem.

For constructing the graph $\mathcal{G}$, it is not hard to see that any deterministic algorithm needs to know all edges of $\mathcal{G}$. For randomized algorithms, however, it is not always necessary to know for every pair $u_i,v_j$ whether the edge $(u_i,v_j)$ is in the graph.
The expected number of node pair queries (``is the pair an edge in the graph?'') that a randomized algorithm needs to make in order to be able to determine whether a perfect matching exists, is known as the {\em randomized query complexity} (or decision tree complexity) of bipartite perfect matching.
It is an easy exercise to check that
the randomized query complexity of the problem is $\Omega(k \ell)$.
Estimating the exact number of queries is, however, not straightforward.
It is not even clear whether $k\ell$ queries are necessary in expectation, or whether $(1-\eps)k\ell$ queries might be sufficient for some $\eps>0$.
Factoring this into the analysis of the maximum bipartite matching algorithm complicates things further.

To simplify things, we
restrict our attention to the case where the degrees of the trees are bounded by a constant. In this case we can check in constant time whether $\mathcal{G}$ contains the desired perfect matching, once a sufficient number of edge queries have been made. We can thus focus solely on the randomized query complexity of the bipartite matching problem and its use in recursive algorithms for the Subtree Isomorphism problem.

It is easy to show that in this case the algorithm of Edmonds and Matula runs in time $O(mn)$, where $|H|=m$ and $|G|=n$. The same algorithm is also able to handle labelled vertices, i.e., each vertex has a label and the labels of $H$ are required to match the labels of the subtree of $G$. Moreover, the algorithm can solve the largest common subtree problem in $O(mn)$ time as well. This is done by recursively assigning a weight to every edge $(u_i,v_j)$ of $\mathcal{G}$ equal to the size of the largest common subtree of $H_i$ and $G_i$, and then asking for the matching of largest weight. (We refer to the appendix for a short complexity analysis and further description of these algorithms.) Our lower bounds from theorems \ref{thm:lb} and \ref{thm:LCST} are thus tight for trees of constant degree.

For the remainder of the section we restrict our attention to trees of constant degree $d$ and height $h$. We first introduce a randomized algorithm that solves the binary problem in expected time $O(\min\{2.8431^h,mn\})$. For comparison, the corresponding upper bound by Edmonds and Matula \cite{Matula1978} is $O(\min\{4^h,mn\})$, i.e., their algorithm makes four recursive calls at each level of the tree. In particular our algorithm is truly subquadratic when $h<1.3267\log_2{n}$. For $d=3$ we give a similar, but more complicated case analysis showing that the problem can be solved in expected time $O(\min\{6.107^h,mn\})$, improving the straightforward $O(\min\{9^h,mn\})$ bound by Edmonds and Matula. For $d > 3$ we introduce a randomized algorithm with expected running time upper bounded by $O(\min\{(d^2-\frac{1}{3} d+{2 \over 3})^h,mn\})$.

\subsection{A faster algorithm for binary trees}\label{sec:binary_upper}
For trees with degree at most two, the Edmonds-Matula procedure can be interpreted as follows. Let $H_L$ and $H_R$ be the left and right subtrees of $H$, and let $G_L$ and $G_R$ be the left and right subtrees of $G$. $H$ is isomorphic to a subtree of $G$ if and only if one of the following two conditions are true:
\begin{enumerate}
\item $H_L$ is isomorphic to a subtree of $G_L$, and $H_R$ is isomorphic to a subtree of $G_R$.
\item $H_L$ is isomorphic to a subtree of $G_R$, and $H_R$ is isomorphic to a subtree of $G_L$.
\end{enumerate}
Each case can be checked with two recursive calls, and checking whether $H$ is isomorphic to a subtree of $G$ can thus be done with at most four recursive calls, giving an $O(4^h)$ upper bound.

Observe that if $H_L$ is not isomorphic to a subtree of $G_L$, then there is no reason to check whether $H_R$ is isomorphic to a subtree of $G_R$. Similarly, if the algorithm concludes that the first condition is met, then there is no reason to check the second condition since we already know that $H$ is isomorphic to a subtree of $G$. Based on these observations, we introduce a simple randomized variant of the algorithm that achieves a significantly better running time by saving recursive calls: Swap $H_L$ and $H_R$ with probability $1/2$, and swap $G_L$ and $G_R$ with probability $1/2$. Then run the Edmonds-Matula algorithm, but do not perform unnecessary recursive calls. We give a formal description of the algorithm in Figure \ref{fig:alg}. We refer to the algorithm as $RandBinarySubIso$.

\begin{figure}[t]
\removelatexerror
\begin{center}
\parbox{\columnwidth}{
\SetAlgoFuncName{Algorithm}{anautorefname}
\begin{function}[H]
\caption{RandBinarySubIso($H,G$)}
\begin{enumerate}
\item If $|H|=0$, return \textbf{true};
\item If $|G|=0$, return \textbf{false};
\item With probability $1/2$ swap $H_L$ and $H_R$ in $H$;
\item With probability $1/2$ swap $G_L$ and $G_R$ in $G$;
\item If $RandBinarySubIso(H_L,G_L) = \textbf{false}$, then go to step 7;
\item If $RandBinarySubIso(H_R,G_R) = \textbf{true}$, then return \textbf{true};
\item If $RandBinarySubIso(H_L,G_R) = \textbf{false}$, then return \textbf{false};
\item If $RandBinarySubIso(H_R,G_L) = \textbf{true}$, then return \textbf{true}. \\Otherwise return \textbf{false};
\end{enumerate}
\vspace{-0.2cm}
\end{function}
}
\end{center}
\caption{A randomized, recursive algorithm for rooted Subtree Isomorphism on binary trees.}
\label{fig:alg}
\end{figure}

\begin{theorem}\label{heightthm_re}
The $RandBinarySubIso$ algorithm runs in expected time $O(\min\{2.8431^h,n^2\})$ for trees $H$ and $G$ of size $O(n)$ and height at most $h$. In particular, it runs in time $O(n^{1.507})$ for trees of height $(1+o(1))\cdot \log_2{n}$, and is strongly subquadratic for trees of height $h<1.3267\log_2{n}$.
\end{theorem}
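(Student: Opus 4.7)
The idea is to refine the obvious ``worst-case expected branching factor at most 3'' bound by separating the running time according to the Boolean answer of the instance. Let $T_Y(h)$ and $T_N(h)$ denote the maximum expected running time of $RandBinarySubIso$ over yes- and no-instances, respectively, on pairs of binary trees of height at most $h$. My plan is to derive a coupled linear recurrence of the form
\begin{align*}
T_Y(h) &\leq \tfrac{9}{4}\,T_Y(h-1) + \tfrac{1}{2}\,T_N(h-1) + C,\\
T_N(h) &\leq 1\cdot T_Y(h-1) + 2\cdot T_N(h-1) + C,
\end{align*}
and then read off the growth rate as the dominant eigenvalue of the associated transition matrix.

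To justify these coefficients, fix a top-level call on $(H,G)$ and let $A_{ij}=[H_i\text{ is isomorphic to a subtree of }G_j]$ for $i,j\in\{0,1\}$; this is a $2\times 2$ Boolean matrix, and the top-level answer is $f(A)=(A_{00}\wedge A_{11})\vee(A_{01}\wedge A_{10})$, i.e., whether the bipartite graph encoded by the $1$-entries of $A$ has a perfect matching in $K_{2,2}$. Lines~3--4 of the algorithm act on $A$ by a uniformly random element of the Klein four-group (a row-swap, a column-swap, or both), after which lines~5--8 query four entries of the (permuted) matrix in a fixed order with short-circuiting. I would then enumerate all $16\times 4$ pairs (matrix, swap) and, for each pair, count (i) the total number of recursive calls performed, and (ii) how that total splits into calls on yes-subinstances and calls on no-subinstances.

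Carrying out the enumeration yields the following extremes. Among yes-configurations ($f(A)=1$), the worst case occurs when exactly three of the four entries of $A$ equal $1$, contributing in expectation $9/4$ yes-subcalls and $1/2$ no-subcalls. Among no-configurations ($f(A)=0$), the worst case occurs when the two $1$-entries of $A$ form a complete row or a complete column, contributing in expectation $1$ yes-subcall and $2$ no-subcalls. Both facts follow by direct case analysis of the query order prescribed by lines~5--8. Plugging these extremes into the recurrence gives the transition matrix
\[
M = \begin{pmatrix} 9/4 & 1/2 \\ 1 & 2 \end{pmatrix},
\]
whose characteristic polynomial $\lambda^{2}-\tfrac{17}{4}\lambda+4=0$ has dominant root $\lambda^{\ast}=(17+\sqrt{33})/8\approx 2.84308$. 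Standard linear-recurrence theory then gives $T_Y(h),\,T_N(h)=O(\lambda^{\ast h})=O(2.8431^{h})$.

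For the $O(n^{2})$ side of the minimum I would invoke the Edmonds--Matula procedure recalled at the start of Section~\ref{sec:UB}: on trees of constant degree it solves rooted Subtree Isomorphism deterministically in $O(mn)=O(n^{2})$ time, which also bounds $RandBinarySubIso$ when memoized (each of the at most $|H|\cdot|G|$ subproblems is then computed once). Taking the minimum of the two analyses gives the stated $O(\min\{2.8431^{h},n^{2}\})$ bound. The numerical corollaries are immediate from $\log_{2}\lambda^{\ast}\approx 1.5075$ and $2/\log_{2}\lambda^{\ast}\approx 1.3267$. The principal obstacle that I anticipate is purely combinatorial: verifying the $(9/4,1/2)$ and $(1,2)$ pairs by systematically walking through all $16\times 4$ matrix--swap combinations and correctly classifying every invoked subcall as a yes- or a no-instance; anything weaker than these tight coefficients would enlarge $\lambda^{\ast}$ and break the $n^{1.507}$ corollary.
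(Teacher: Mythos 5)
Your proposal is correct and follows essentially the same route as the paper: the paper's Lemma~\ref{rec} establishes exactly the coupled recurrence $T_{yes}(h)\leq\tfrac94 T_{yes}(h-1)+\tfrac12 T_{no}(h-1)$ and $T_{no}(h)\leq T_{yes}(h-1)+2T_{no}(h-1)$, and then diagonalizes the same $2\times2$ transition matrix to extract the dominant eigenvalue $(17+\sqrt{33})/8$. The only cosmetic differences are that the paper derives the yes/no coefficients by a short structured case split (conditioning on whether the random swaps put a witnessing matching on lines 5--6) rather than your exhaustive $16\times4$ enumeration, and for the $O(n^2)$ side it implicitly uses the fact that each fixing of the randomness makes a subset of the Edmonds--Matula calls (bounded by the $T(m,n)\leq mn$ recurrence in the appendix) rather than invoking memoization, but both justifications are sound.
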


Before proving Theorem \ref{heightthm_re} we first prove a useful lemma.
Let $T(h)$ be the maximum expected number of times $RandBinarySubIso(H,G)$ makes a recursive call with an empty tree when $H$ and $G$ are arbitrary rooted trees with height at most $h$. Let $T_{yes}(h)$ and $T_{no}(h)$ be defined similarly, but under the assumption that the algorithm returns \textbf{true} and \textbf{false}, respectively. Note that $T(0) = T_{yes}(0) = T_{no}(0) = 1$. Also note that $T(h) = \max\{T_{yes}(h),\; T_{no}(h)\}$.

\begin{lemma} \label{rec}
For all $h \geq 0$,
\begin{align*}
T_{yes}(h)&~\leq~ 2.25\cdot T_{yes}(h-1) + 0.5\cdot T_{no}(h-1)~,\\
T_{no}(h)&~\leq~ T_{yes}(h-1) + 2\cdot T_{no}(h-1)~.
\end{align*}
\end{lemma}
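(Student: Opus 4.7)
The plan is a case analysis on the Boolean values of the four sub-problems inspected by steps~5--8 of the algorithm. Write $A=[H_L\sqsubseteq G_L]$, $B=[H_R\sqsubseteq G_R]$, $C=[H_L\sqsubseteq G_R]$ and $D=[H_R\sqsubseteq G_L]$, so that the overall instance is a yes-instance iff one of the two ``diagonals'' $\{A,B\}$ or $\{C,D\}$ is all-true. My first step is to check that the two independent swap coin flips in steps~3 and~4 induce exactly four equally likely orderings of the subsequent calls, namely $(A,B,C,D)$, $(B,A,D,C)$, $(C,D,A,B)$ and $(D,C,B,A)$; in each of them steps~5--6 inspect one diagonal and steps~7--8 the other. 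Thus with probability $1/2$ the algorithm begins on $\{A,B\}$ and otherwise on $\{C,D\}$, and, conditioned on which diagonal starts, the order within it is uniform.

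To bound $T_{yes}(h)$ I would enumerate the configurations of $(A,B,C,D)$ consistent with a yes answer. If both diagonals are all-true, the algorithm confirms yes on the first diagonal at cost $2\,T_{yes}(h-1)$. In the asymmetric worst case, say $\{A,B\}$ is all-true and $\{C,D\}$ contains exactly one true entry, the four orderings yield: starting on $\{A,B\}$ (probability $1/2$) costs $2\,T_{yes}(h-1)$; starting on $\{C,D\}$ and picking its true entry first (probability $1/4$) costs $T_{yes}(h-1)+T_{no}(h-1)+2\,T_{yes}(h-1)$; and starting on $\{C,D\}$ and picking its false entry first (probability $1/4$) costs $T_{no}(h-1)+2\,T_{yes}(h-1)$. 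Averaging gives exactly $2.25\,T_{yes}(h-1)+0.5\,T_{no}(h-1)$, and the remaining yes-configurations ($\{C,D\}$ all-false, or both diagonals all-true) are strictly cheaper.

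For $T_{no}(h)$ both diagonals are always visited, since neither ever triggers an early yes. The worst case is when each diagonal contains exactly one true entry. A direct enumeration of the four orderings in this case shows that the total expected cost is exactly $T_{yes}(h-1)+2\,T_{no}(h-1)$: two orderings pay $2\,T_{yes}(h-1)+2\,T_{no}(h-1)$ (because they pick the true entry first on both diagonals) while the other two short-circuit at cost $2\,T_{no}(h-1)$. All other no-configurations, in which at least one diagonal is entirely false, can only be cheaper, because more false entries cause earlier short-circuiting in steps~5 and~7.

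The main obstacle is the careful bookkeeping of the four orderings: the two swap flips do not independently randomize the order within each diagonal, so one cannot shortcut the analysis by treating the first- and second-diagonal orderings as independent. Fortunately, the cost on each diagonal depends only on its own internal order, so the contributions from the two diagonals decouple in the expectation even though the underlying permutation random variables do not, and the four-way enumeration above is enough.
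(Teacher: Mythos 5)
Your proof is correct and takes essentially the same approach as the paper: both condition on the random swaps (equivalently, the four induced orderings) and bound the expected number of ``yes'' and ``no'' recursive calls by a worst-case analysis over the truth values of the four subtree-containment checks. The paper phrases the $T_{no}$ bound slightly differently---it observes that lines 5 and 7 each inspect a uniformly random entry and hence return \textbf{true} with probability at most $1/2$, rather than pinpointing the worst-case Boolean configuration as you do---but this is the same calculation in a different guise, and your identification of the dominating cases and the resulting coefficients match the paper's.
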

\begin{proof}
To simplify notation we write $H \subseteq G$ when $H$ is isomorphic to a subtree of $G$, and $H \not\subseteq G$ otherwise.

We first show that $T_{yes}(h)\leq 2.25\cdot T_{yes}(h-1) + 0.5\cdot T_{no}(h-1)$. Assume therefore that $H \subseteq G$. With probability $1/2$ we then have $H_L \subseteq G_L$ and $H_R \subseteq G_R$, such that the algorithm returns \textbf{true} in line 6 after spending $2\cdot T_{yes}(h-1)$ time in expectation. On the other hand, with probability $1/2$ the outcomes of lines 5 and 6 depend on the trees in question, and the recursive calls in lines 7 and 8 both return \textbf{true} if reached. More precisely, we get three cases that depend on the trees:
\begin{itemize}
\item[$(i)$]
$H_L \subseteq G_L$ and $H_R \subseteq G_R$: The recursive calls in lines 5 and 6 both return \textbf{true}, and the algorithm spends $2\cdot T_{yes}(h-1)$ time in expectation.
\item[$(ii)$]
$H_L \not\subseteq G_L$ and $H_R \not\subseteq G_R$: The recursive call in line 5 returns \textbf{false}, and the recursive calls in lines 7 and 8 both return \textbf{true}. The algorithm spends $T_{no}(h-1)+2\cdot T_{yes}(h-1)$ time in expectation.
\item[$(iii)$]
$H_L \subseteq G_L$ and $H_R \not\subseteq G_R$, or $H_L \not\subseteq G_L$ and $H_R \subseteq G_R$: The recursive call in line 5 returns \textbf{false} with probability $1/2$ and \textbf{true} with probability $1/2$. In the second case the recursive call in line 6 returns \textbf{false}. The recursive calls in lines 7 and 8 both return \textbf{true}. The algorithm spends $T_{no}(h-1)+2.5\cdot T_{yes}(h-1)$ time in expectation.
\end{itemize}
The third case thus dominates the two others, and we conclude that $T_{yes}(h)\leq 2.25\cdot T_{yes}(h-1) + 0.5\cdot T_{no}(h-1)$.

We next show that $T_{no}(h)\leq T_{yes}(h-1) + 2\cdot T_{no}(h-1)$. Assume therefore that $H \not\subseteq G$. We get the contribution $2\cdot T_{no}(h-1)$ as follows. In either line $5$ or $6$ we get the answer \textbf{false} from a recursive call, and in either line $7$ or $8$ we also get the answer \textbf{false} from a recursive call. This amounts to two ``no'' answers which cost $2\cdot T_{no}(h-1)$ in expectation. We get the contribution $T_{yes}(h-1)$ as follows. With probability at most $1/2$ we get the answer \textbf{true} in line $5$ (which means that we get \textbf{false} in line 6). Similarly, with probability at most $1/2$ we get the answer \textbf{true} in line $7$ (which means that we get \textbf{false} in line 8). In total, we get that $T_{no}(h)\leq 2\cdot T_{no}(h-1)+{1 \over 2}T_{yes}(h-1)+{1 \over 2}T_{yes}(h-1)$.
\end{proof}

\begin{proof}[Proof of Theorem~\ref{heightthm_re}]
Lemma \ref{rec} gives us that
\begin{align*}
\left( 	\begin{array}{c}
T_{yes}(h) \\
T_{no}(h)
\end{array}
\right)
&~\leq~
\left( 	\begin{array}{cc}
2.25 & 0.5 \\
1 & 2
\end{array}
\right)
\left( 	\begin{array}{c}
T_{yes}(h-1) \\
T_{no}(h-1)
\end{array}
\right)\\
&~\leq~
\left( 	\begin{array}{cc}
2.25 & 0.5 \\
1 & 2
\end{array}
\right)^h
\left( 	\begin{array}{c}
1 \\
1
\end{array}
\right) ~.
\end{align*}
A diagonalization of the matrix yields
	$$
		\left( 	\begin{array}{cc}
			2.25 & 0.5 \\
			1 & 2
			\end{array}
		\right)
		~=~ Q^{-1}JQ~,
$$
where
\begin{align*}
Q^{-1} &~=~ \left( 	\begin{array}{cc}
{1 - \sqrt{33} \over 8} & {1 - \sqrt{33} \over 8} \\
1 & 1
\end{array}
\right)\\
J &~=~
\left( 	\begin{array}{cc}
{17 - \sqrt{33} \over 8} & 0 \\
0 & {17 + \sqrt{33} \over 8}
\end{array}
\right)\\
Q &~=~
\left( 	\begin{array}{cc}
-{4\over \sqrt{33}} & {1 \over 2}+{1 \over 2\sqrt{33}} \\
{4\over \sqrt{33}} & {1 \over 2}-{1 \over 2\sqrt{33}}
\end{array}
\right) ~,
\end{align*}
	and therefore
\[
\left(\begin{array}{c}
        \!T_{yes}(h)\! \\
	\!T_{no}(h)\!
\end{array}\right)
\,\leq\,
\left(\begin{array}{c}
        \!0.065\cdot 1.407^h+0.94 \cdot 2.8431^h\! \\
	\!-0.109 \cdot 1.407^h+1.109 \cdot 2.8431^h\!
\end{array}\right)\,.
\]
Thus,
$T(h)=O(2.8431^h)$, which proves the theorem.
\end{proof}

\subsection{A Faster Algorithm for Ternary Trees}
Here we discuss the subtree isomorphism problem for rooted ternary trees. We prove Theorem~\ref{thm:deg3} by showing that Subtree isomorphism for rooted ternary trees of height $h$ can be solved in expected time $O(6.107^h)$.
Just as with the binary case, this running time is lower than the runtime given by our generic algorithm for constant degree trees in Section \ref{sec:any_upper}.

Similarly to the binary case, the proof of the theorem proceeds by a recursive approach. In each recursive call, we consider a randomized decision tree for $3\times 3$ bipartite perfect matching, where each query corresponds to a recursive call on height one less. We then analyze the runtime similar to the binary tree case: we distinguish between the ``yes'' and ``no'' case of the query answer, and write the running time as two recurrences, one for $T_{yes}$, when the algorithm said the trees are isomorphic, and one for $T_{no}$ when they were not.
We analyze the randomized decision tree in terms of the expected number of ``yes'' and ``no'' query answers in the worst case.

The randomized query protocol is as follows. Let $U$ and $V$ be the two partitions of the bipartite matching instance (respectively, $U$ are the subtrees of the root of one tree and $V$ are the subtrees of the root of the other). First we pick $U$ or $V$ at random w.p. $1/2$. If we pick $V$, then the names of $U$ and $V$ are swapped. Now, with probability $1/6$ we pick a permutation of the vertices in $U$, and with probability $1/6$ we pick a permutation of $V$. After these two permutations are fixed, the protocol is deterministic.
Let $a,b,c$ be the nodes of $U$ and $x,y,z$ be the nodes of $V$, in the order of the chosen permutations.
The deterministic decision tree we use is depicted in Figures~\ref{fig:bigtree} and~\ref{fig:smalltrees}.

\begin{figure}
\begin{center}
\includegraphics[scale=0.75]{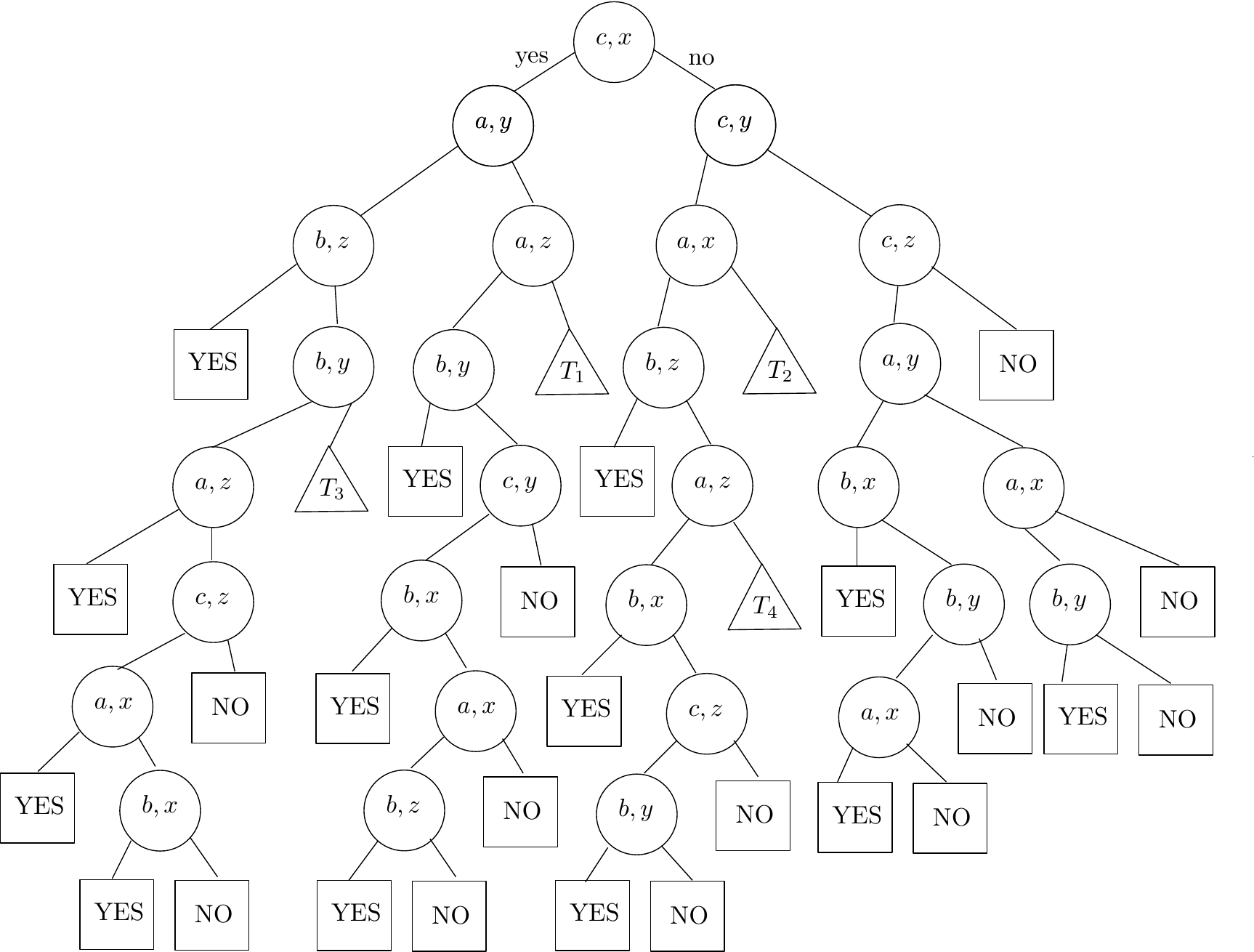}
\end{center}
\caption{The decision tree used for bipartite matching in the degree $3$ case.}
\label{fig:bigtree}
\end{figure}

\begin{figure}%[ht!]
\vspace{0.5cm}
\begin{center}
\includegraphics[scale=0.75]{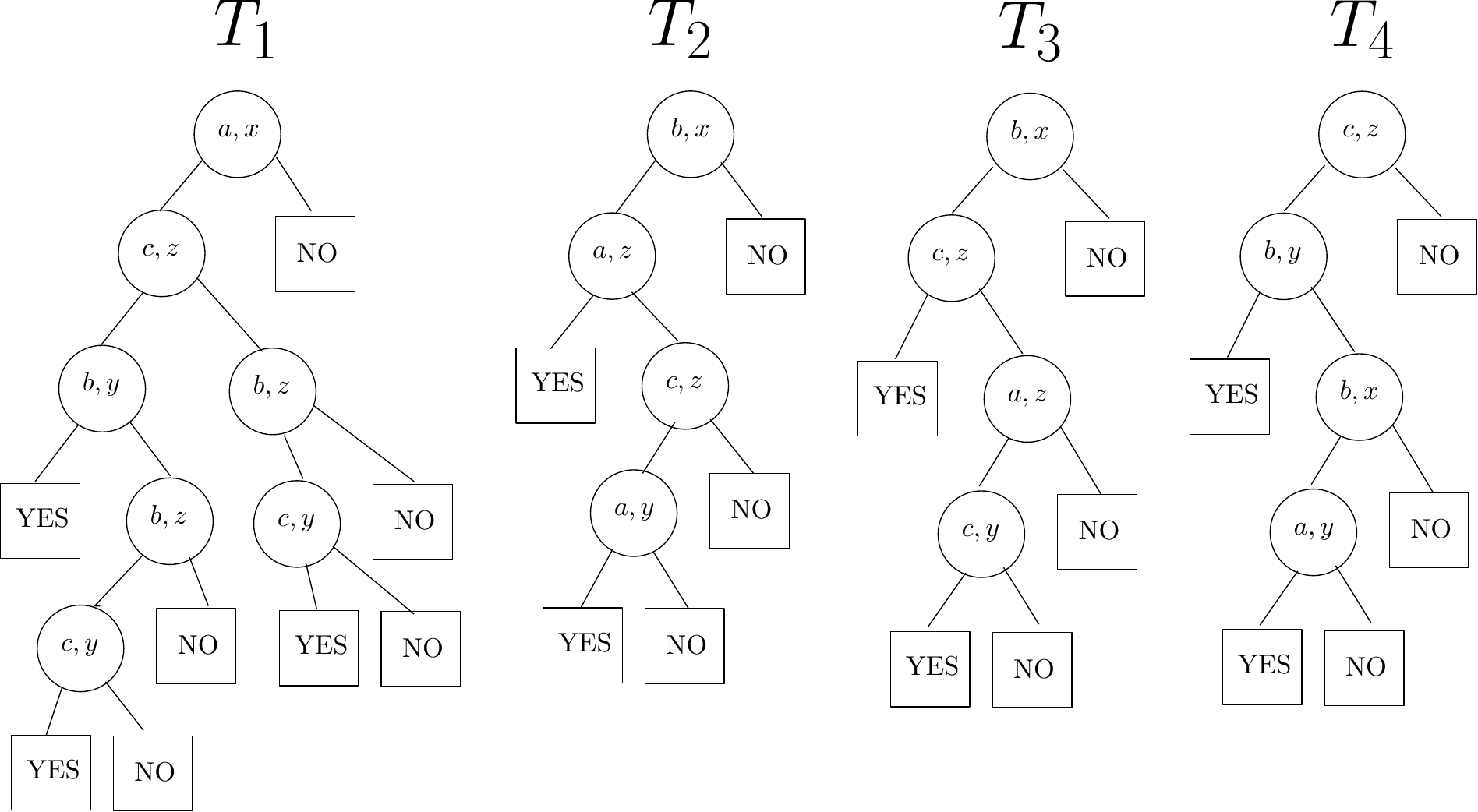}
\end{center}
\caption{The missing subtrees of the decision tree used for bipartite matching in the degree $3$ case.}
\label{fig:smalltrees}
\end{figure}

For each of the $2^9$ choices for the answers to the $9$ edge queries in the $3\times 3$ matching instance, we consider each of the $72$ randomized choices as described above (swap $U$ and $V$, permute $U$ and $V$) and consider the decision tree, computing the expected number of ``yes'' and ``no'' calls. Using a computer program, we establish that when the instance has no perfect matching, the expected number of ``yes'' calls is always at most $26/9$, and the expected number of ``no'' calls is always at most $37/9$; this happens when the complement of the graph consists of a 4-cycle, disjoint from a single edge. On the other hand, if the instance has a perfect matching there are two cases that dominate all others: when the expected number of ``yes'' calls is $131/36$, and the expected number of ``no'' calls is $61/36$, or when the expected number of ``yes'' calls is $133/36$, and the expected number of ``no'' calls is $5/3$. There are thus two options for the recurrence relation, and one of them dominates the other. We present the recurrence that achieves the maximum, and hence gives the worst-case expected runtime for the ternary case.
$$
\left( 	\begin{array}{c}
T_{yes}(h) \\
T_{no}(h)
\end{array} 
\right)
~\leq~
\left( 	\begin{array}{cc}
133/36 & 5/3 \\
26/9 & 37/9
\end{array} 
\right)
\left( 	\begin{array}{c}
T_{yes}(h-1) \\
T_{no}(h-1)
\end{array} 
\right)
~\leq~
		\left( 	\begin{array}{cc}
			133/36 & 5/3 \\
			26/9 & 37/9
			\end{array} 
		\right)^h
		\left( 	\begin{array}{c}
			1 \\
			1
			\end{array} 
		\right)
$$

	The diagonalization yields
	$$
		\left( 	\begin{array}{cc}
			133/36 & 5/3 \\
			26/9 & 37/9
			\end{array} 
		\right)
		~=~ Q^{-1} J Q,$$
		
		where 
\begin{align*}
	J &~=~	\left( 	\begin{array}{cc}
			\frac{281-\sqrt{25185}}{72} & 0 \\
			0 & {\frac{281+\sqrt{25185}}{72}}
			\end{array} 
		\right)\\
	Q &~=~	\left( 	\begin{array}{cc}
			\frac{1}{2}+ \frac{\sqrt{25185}}{3358} & \frac{1}{2}- \frac{\sqrt{25185}}{3358} \\
			-\frac{104}{\sqrt{25185}} & \frac{104}{\sqrt{25185}}
			\end{array} 
		\right)~,		
\end{align*}
	which gives that 
\[
\left(\begin{array}{c}
        T_{yes}(h) \\
	T_{no}(h)
\end{array}\right)
~\leq~
\left(\begin{array}{c}
        0.17\cdot 1.7^h+ 0.831 \cdot 6.107^h \\
	-0.2 \cdot 1.69^h+1.21\cdot 6.107^h
\end{array}\right)~.
\]
		%\begin{align*}
		%	T^+(h)& \leq 0.17\cdot 1.7^h+ 0.831 \cdot 6.107^h,\\
		%	T^-(h)&\leq -0.2 \cdot 1.69^h+1.21\cdot 6.107^h.
		%\end{align*}

Thus, the running time overall is $O(6.107^h)$.

%Y calls: 3.69444 , N calls: 1.66667 and Y calls: 3.63889 , N calls:1.69444) and a single dominating case for NO (Y calls: 2.88889 , N calls: 4.11111
%133/36, 5/3 and 131/36, 61/36, 26/9, 37/9
%61/36

\subsection{An algorithm for any constant degree}\label{sec:any_upper}

In this section we describe a way to use randomization to save subtree comparisons in the Edmonds-Matula algorithm \cite{Matula1978} for all degrees $d > 2$. Recall that the algorithm works as follows.
Given two trees $H$ and $G$ of constant degree $d$, the goal is to decides whether $H$ is isomorphic to a subtree of $G$ by using recursion. If the roots of either $H$ or $G$ have less than $d$ children, we simply view the missing subtrees as being a special empty subtree.
\begin{enumerate}
\item Let $H_1,\ldots,H_d$ be the $d$ subtrees of $H$, and let $G_1,\ldots,G_d$ be the $d$ subtrees of $G$;
\item Build a bipartite graph $\mathcal{G}$ with $d$ vertices $\mathcal{U} = \{u_1,\ldots,u_d\}$ on the left and $d$ vertices $\mathcal{W} = \{v_1,\ldots,v_d\}$ on the right. For all $i,j \in [d]$, connect $u_i$ and $v_j$ if and only if $H_i$ is isomorphic to a subtree of $G_j$. We decide which edges appear in the graph recursively.
\item Output that $H$ is isomorphic to a subtree of $G$ if and only if there is a perfect matching in the bipartite graph $\mathcal{G}$.
\end{enumerate}
The runtime of the algorithm is $O(\min\{d^{2h},n^2\})$, where $h$ is the height.
Intuitively, we can improve the runtime of the algorithm as follows. Perform recursive calls corresponding to edges $(u_i,v_j)$ in a random order, and stop as soon as we either detect a perfect matching or rule out the existence of a perfect matching. It is not difficult to show that this randomized version of the algorithm performs $d^2-\Omega(1)$ recursive calls in expectation out of the $d^2$ possible calls. That is, in expectation, we save at least a constant number of recursive calls. This implies that the algorithm runs in $O((d^2-\Omega(1))^h)$ expected time, which is faster than the deterministic algorithm. However, we prove below that we can save $\Omega(d)$ recursive calls in expectation using a slightly different variant of the randomized algorithm.

\begin{lemma} \label{query}
Let $\mathcal{G}$ be a bipartite graph with $d$ vertices $\mathcal{U} = \{u_1,\ldots,u_d\}$ on the left and $d$ vertices $\mathcal{W} = \{v_1,\ldots,v_d\}$ on the right, and suppose we are given query access to the adjacency matrix of $\mathcal{G}$. There is a randomized query algorithm that decides whether $\mathcal{G}$ contains a perfect matching by making $d^2-{1 \over 3}d+{2 \over 3}$ queries in expectation, with probability 0 of making an error.
\end{lemma}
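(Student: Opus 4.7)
My plan is to prove Lemma~\ref{query} by constructing a randomized adaptive query algorithm and analyzing its expected query complexity. The naive algorithm queries all $d^2$ edges; the goal is to save $(d-2)/3$ queries in expectation.

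The algorithm I propose maintains two sets: $E_+$ (edges confirmed present) and $E_-$ (edges confirmed absent). At each step, it picks a next edge to query based on a randomized rule that depends on the current state, and updates $E_+, E_-$ accordingly. After each query, it checks two stopping conditions: (i) if $E_+$ already contains a perfect matching, return \textbf{yes}; (ii) if $E_-$ certifies a Hall violator (i.e., some set $S \subseteq \mathcal{U}$ whose potential neighborhood in $\mathcal{W} \setminus E_-$ has size less than $|S|$), return \textbf{no}. Otherwise, continue. Correctness is then immediate because the algorithm only stops once the answer is unambiguously determined by $E_+ \cup E_-$, regardless of the unqueried edges.

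For the complexity analysis, I would structure the execution in phases corresponding to processing rows of the adjacency matrix in a random order. The first two rows in this order are queried exhaustively, contributing $2d$ queries. For each subsequent row $i \in \{3,\dots,d\}$, I aim to save $1/3$ query in expectation by exploiting the information obtained from previous rows: once the first few rows are known, certain columns are structurally distinguished (e.g., columns already ``covered'' by the partial matching versus those not), and randomizing the order in which the entries of row $i$ are probed gives a constant probability that the first few queries already trigger one of the two stopping conditions. Summing these contributions gives the total
\[
2d + (d-2)\!\left(d - \tfrac{1}{3}\right) \;=\; d^2 - \tfrac{d-2}{3},
\]
matching the claimed bound.

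The main obstacle is obtaining the exact constant $1/3$ of savings per row. Proving this will require a careful case analysis, most naturally organized by the structure of the partial matching and the Hall-style certificates available after each row is processed. I expect that the cleanest approach is an amortized argument: assign a small ``credit'' to each query in rows~$3,\dots,d$ in a way that every query's actual cost can be paid with (on average) $1 - 1/(3d)$ charge, using the random symmetry among the columns not yet constrained by the partial matching. If a direct analysis turns out to be awkward, an alternative is to proceed by induction on $d$, showing that removing a random row (after fully querying it) reduces the problem to a $(d-1) \times (d-1)$ instance plus $d$ extra queries, and that the savings accumulate cleanly through the recursion.
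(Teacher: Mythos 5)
Your framework (maintain sets $E_+,E_-$ of confirmed present/absent edges, stop once the answer is forced, so correctness is automatic) is sound, but the complexity analysis is still a plan rather than a proof, and the plan has a structural flaw. If you process rows in a random order and charge $2d$ for the first two rows and claim $1/3$ savings on each of rows $3,\dots,d$, you are implicitly asserting that a row-by-row scheme can stop early in the ``yes'' case. It cannot: if $\mathcal{G}$ has a perfect matching, any certificate in $E_+$ must contain one edge from every row, so stopping condition (i) cannot fire until the last row is reached. A pure row-by-row scheme therefore saves nothing on ``yes'' instances, which means the per-row $1/3$ amortization you are aiming for is unattainable in that case; the ``main obstacle'' you flag is not merely technical but fatal to this particular decomposition.

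The paper's proof avoids this by running two \emph{different} algorithms, chosen at random, each tailored to one of the two possible answers. With probability $1/3$ it queries the $d^2$ edges in a uniformly random order and stops once $E_+$ contains a perfect matching; fixing one matching (``marked'' edges), each of the $d^2-d$ unmarked edges has probability $1/(d+1)$ of coming after all $d$ marked ones, so at least $(d^2-d)/(d+1)\geq d-2$ queries are saved when the answer is ``yes''. With probability $2/3$ it runs a row-by-row algorithm: swap $\mathcal{U}$ and $\mathcal{W}$ with probability $1/2$, permute $\mathcal{U}$ randomly, query rows in order, and stop on a Hall violator; a symmetry argument shows that with probability at least $1/2$ there is a violator $S'$ with $|S'|\leq (d+1)/2$, giving an expected saving of at least $d/2-1$ rows' worth of queries when the answer is ``no''. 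The $1/3 : 2/3$ mixture is exactly the weight that equalizes the two worst cases to $d^2 - d/3 + 2/3$. The essential idea you are missing is this randomized choice between an edge-randomized strategy (good for ``yes'') and a row-randomized strategy (good for ``no''); neither one alone achieves the bound, and no per-row amortization substitutes for it.
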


We use the following two claims to prove the lemma.

	\begin{claim} \label{yescase}
		Assume that $\mathcal{G}$ has a perfect matching. Then the following algorithm finds a perfect matching after making $d^2-d+2$ expected queries: Query edges $(u_i,v_j)$ in a random order, and stop when finding a perfect matching.
	\end{claim}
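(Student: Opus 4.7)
The plan is to reduce the analysis of the random stopping time $T$ to the position, in the random permutation of edges, at which a single fixed perfect matching is completed. Since $\mathcal{G}$ has a perfect matching by assumption, fix any perfect matching $M \subseteq E(\mathcal{G})$ with $|M| = d$, and let $T_M$ denote the position in the random query order at which the last edge of $M$ is queried. By the time all edges of $M$ have been seen, the algorithm has certainly detected a perfect matching (namely $M$), so it must have stopped at or before step $T_M$; hence $T \le T_M$ pointwise, and in particular $E[T] \le E[T_M]$.

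Next, I would compute $E[T_M]$. Since the permutation is uniformly random, the $d$ positions occupied by the edges of $M$ form a uniformly random $d$-element subset of $\{1, 2, \ldots, d^2\}$. A standard symmetry argument (the $j$-th order statistic of $d$ uniform samples without replacement from $[N]$ has expectation $\frac{j(N+1)}{d+1}$) gives
\[
E[T_M] \;=\; \frac{d(d^2 + 1)}{d+1}.
\]

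Finally, a short polynomial division shows $d^3 + d = (d+1)(d^2 - d + 2) - 2$, so
\[
E[T_M] \;=\; d^2 - d + 2 \;-\; \frac{2}{d+1} \;<\; d^2 - d + 2,
\]
completing the proof via $E[T] \le E[T_M]$.

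There isn't really a serious obstacle; the only subtlety worth double-checking is the first step, namely that the algorithm's stopping condition refers to the appearance of \emph{any} perfect matching, so that prefixing the analysis by an arbitrary fixed $M$ is valid for upper bounding $T$. The rest is a one-line computation with the order-statistic formula and a polynomial division. Note that this actually yields the slightly stronger bound $d^2 - d + 2 - \frac{2}{d+1}$, which is sufficient (and could be useful when plugging into Lemma~\ref{query}).
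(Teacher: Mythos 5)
Your proof is correct and follows the same strategy as the paper: fix one perfect matching $M$, observe that the algorithm halts no later than the time $T_M$ at which the last edge of $M$ is queried, and then compute $E[T_M]$ exactly. The only cosmetic difference is that the paper computes $E[T_M]$ by linearity of expectation over the $d^2-d$ non-matching edges (each has probability $\tfrac{1}{d+1}$ of being queried after all of $M$), while you invoke the order-statistics formula for the maximum of a uniformly random $d$-subset of $[d^2]$; both yield the identical value $\tfrac{d(d^2+1)}{d+1} = d^2 - d + 2 - \tfrac{2}{d+1}$.
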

	\begin{proof}
Fix a perfect matching present in $\mathcal{G}$ and call its $d$ edges ``marked''.
We stop when all marked edges have been queried. There are $d^2-d$ unmarked edges. The probability that a given unmarked edge is not queried is $1 \over d+1$. Therefore, the expected number of unqueried, unmarked edges is ${d^2-d \over d+1}\geq d-2$.
	\end{proof}
	\begin{claim} \label{nocase}
		Assume that $\mathcal{G}$ does not have a perfect matching. Then the following algorithm makes at most $d^2-{1 \over 2}d+1$ queries in expectation before determining that $\mathcal{G}$ does not contain a perfect matching.
		\begin{enumerate}
			\item With probability $1/2$ swap $\mathcal{U}$ and $\mathcal{W}$;
			\item Randomly permute the vertices of $\mathcal{U}=\{u_1,u_2,\ldots,u_d\}$;
			\item Query all edges adjacent to $u_i$ for $i$ going from 1 to $d$, but stop when ruling out the existence of a perfect matching, i.e., stop when the set of processed vertices $S=\{u_1,\ldots,u_i\}$ contains a subset $S'$ with a neighbourhood $N(S')$ that is smaller than the size of $S'$.
		\end{enumerate}
	\end{claim}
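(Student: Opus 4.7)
The plan is to bound the expected stopping step $i^{*}$ at which the algorithm first detects a Hall violator, so that the expected number of queries is $d\cdot\mathbb{E}[i^{*}] \leq d^{2} - d/2 + 1$. I will split on the maximum matching size $m$ of $\mathcal{G}$, noting that $m \leq d-1$ since $\mathcal{G}$ has no perfect matching.

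If $m \leq d-2$, then after any $m+1$ vertices are processed Hall's condition must already fail on the processed set, since the induced bipartite subgraph has matching number at most $m < m+1$. Thus deterministically $i^{*} \leq m+1 \leq d-1$ and the algorithm makes at most $d(d-1) \leq d^{2} - d/2 + 1$ queries. The remaining and main case is $m=d-1$. Apply K\"onig's theorem to pick a vertex cover $C = C_{U} \cup C_{W}$ with $C_{U} \subseteq \mathcal{U}$, $C_{W} \subseteq \mathcal{W}$, $|C_{U}|+|C_{W}| = d-1$. Let $k_{U}$ (respectively $k_{W}$) denote the smallest size of a set $S\subseteq\mathcal{U}$ (resp.\ $S\subseteq\mathcal{W}$) satisfying $|N(S)|<|S|$. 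Since $\mathcal{U}\setminus C_{U}$ has all its neighbors inside $C_{W}$, it is a deficient set of size $d-|C_{U}|$, so $k_{U}\leq d-|C_{U}|$. Symmetrically $k_{W}\leq d-|C_{W}|$, yielding the key inequality
\[
k_{U} + k_{W} \;\leq\; 2d - (|C_{U}|+|C_{W}|) \;=\; d+1.
\]

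Conditioned on the coin flip selecting to process side $X \in \{\mathcal{U},\mathcal{W}\}$, fix any minimum deficient set $S^{*}_{X}\subseteq X$ of size $k_{X}$. The algorithm must stop no later than the step at which the last vertex of $S^{*}_{X}$ has been processed. Since the uniformly random permutation of $X$ places $S^{*}_{X}$ at a uniformly random $k_{X}$-subset of the positions $\{1,\ldots,d\}$, its expected maximum position equals $k_{X}(d+1)/(k_{X}+1) = d - f(k_{X})$, where $f(k) := (d-k)/(k+1)$. Averaging over the coin gives $\mathbb{E}[i^{*}] \leq d - \tfrac{1}{2}(f(k_{U}) + f(k_{W}))$, so it suffices to show
\[
f(k_{U}) + f(k_{W}) \;\geq\; 1 - 2/d
\]
subject to $k_{U}+k_{W}\leq d+1$ with $1\leq k_{U},k_{W}\leq d$.

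Since $f$ is convex and strictly decreasing on $[1,d]$, the minimum of $f(k_{U})+f(k_{W})$ over these constraints is attained on the line $k_{U}+k_{W}=d+1$ and, by convexity, at $k_{U}=k_{W}=(d+1)/2$, giving the value $2(d-1)/(d+3)$; this also lower-bounds the integer minimum when $d$ is even. The inequality $2(d-1)/(d+3) \geq 1 - 2/d$ rearranges to $d^{2}-3d+6 \geq 0$, which holds for all real $d$ (discriminant $-15$). The main obstacle is the K\"onig step: a purely one-sided permutation analysis is too weak for $d \geq 5$ when both $k_{U}$ and $k_{W}$ are close to $d-1$, and ruling this out requires applying K\"onig's theorem to both partitions simultaneously, combined with the explicit coin flip between the two processing orders.
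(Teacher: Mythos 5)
Your proof is correct, and it takes a genuinely different — and in fact sharper — route than the paper.

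The paper's proof only uses a one-sided version of your König observation: it shows that if a deficient set $S' \subseteq \mathcal U$ is ``large'' (size $> d/2$), then $T' = \mathcal W \setminus N(S')$ is a ``small'' deficient set on the other side, so after the coin flip the processed partition has a deficient set of size at most $\lceil (d+1)/2 \rceil$ with probability $\geq 1/2$. It then charges full cost $d^2$ to the complementary $1/2$-probability event and evaluates the resulting bound at $|S'| = (d+1)/2$. You instead invoke König explicitly on both partitions, derive the joint constraint $k_U + k_W \leq d+1$, account for the (nonzero) savings on \emph{both} branches of the coin flip, and optimize $f(k_U)+f(k_W)$ by convexity on the line $k_U + k_W = d+1$. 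This buys more than elegance: the paper's chain
\[
d^2\Bigl(1 - \tfrac{1-1/d}{2(d+3)}\Bigr) \;\leq\; d^2 - \tfrac{d}{2} + 1
\]
reduces to $d^2 - d \geq (d-2)(d+3)$, i.e.\ $d \leq 3$, so as written the paper's final inequality actually \emph{fails} for $d \geq 4$ (for even $d$ it can be rescued by using the parity-specific bound $|S'| \leq d/2$; for odd $d \geq 5$ it cannot, because the lossy ``worst case $d^2$ on the bad branch'' step discards real savings). Your two-sided analysis fixes this cleanly, and your reduction $2(d-1)/(d+3) \geq 1 - 2/d \Leftrightarrow d^2 - 3d + 6 \geq 0$ is checked correctly. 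The preliminary case split on the matching size ($m \leq d-2$ handled deterministically via $i^* \leq m+1$, so only $m=d-1$ requires the randomized analysis) is also a clean way to make the König step apply exactly where it is needed.
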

	\begin{proof}
		Consider the sets $\mathcal{U}$ and $\mathcal{W}$ prior to running the algorithm.
		By Hall's theorem, the set $\mathcal{U}$ contains a set $S'$ such that $|N(S')|<|S'|$. We can assume that $|S'|=|N(S')|+1$, since otherwise we can iteratively remove a vertex from $S'$ until this condition is satisfied. Consider two cases.
		\begin{itemize}
			\item $d$ is even: If $|S'|\geq {d \over 2}+1$, we define $T'=\mathcal{W} \setminus N(S')$. Because $N(S')\geq d/2$, we get that $|T'|\leq {d \over 2}$. By our construction of $T'$, we have that $N(T') \subseteq \mathcal{U} \setminus S'$ and, as a result, $|N(T')|<|T'|$. Given the first step of the algorithm, with probability at least $1/2$ the set $\mathcal{U}$ therefore contains a set $S'$ such that $|N(S')|<|S'|\leq {d \over 2}$.
			\item $d$ is odd: It follows from as similar argument that, with probability at least $1/2$, the set $\mathcal{U}$ contains a set $S'$ such that $|N(S')|<|S'|\leq {d+1 \over 2}$.
		\end{itemize}
		We now condition on the set $\mathcal{U}$ containing $S'$ with $|S'|\leq {d+1 \over 2}$ and $|N(S')|<|S'|$.

		The algorithm stops once it queries all vertices from $S'$, since a perfect matching is then ruled out by Hall's theorem. The probability that we do not process a given vertex before processing all vertices in $S'$ is ${1/(|S'|+1)}$. Therefore the expected number of unprocessed vertices when the algorithm stops is at least
		$$
			(d-|S'|)\cdot {1 \over |S'|+1}~\geq~ {d-1 \over 2}\cdot {1 \over {d+1 \over 2}+1}~=~{d-1 \over d+3}~.
		$$

		Hence, with probability $1/2$, we query $d \left(d-{d-1 \over d+3}\right)$ edges, and overall the number of queried edges is
$$
{1 \over 2}\left[d \left(d-{d-1 \over d+3}\right)\right]+{1 \over 2}d^2~=~d^2\left(1-{1-{1 \over d} \over 2(d+3)}\right) 
~\leq~ d^2-{1 \over 2}d+1~.
$$
		In the last inequality we use that $d \geq 3$.
	\end{proof}

\begin{proof}[Proof of Lemma \ref{query}]

	We prove the lemma by using claims \ref{yescase} and \ref{nocase}.

	With probability $1/3$ we run the algorithm from Claim \ref{yescase} and with probability $2/3$ we run the algorithm from Claim \ref{nocase}.
	Consider the case when $\mathcal{G}$ has a perfect matching. Then the expected number of  edges queried is upper bounded by
	$$
		{1 \over 3}(d^2-d+2)+{2 \over 3}d^2~=~d^2-{1 \over 3}d+{2 \over 3}~.
	$$
	On the other hand, for the case when $\mathcal{G}$ does not contain a perfect matching, the expected number of edges queried is upper bounded by
	$$
		{1 \over 3}d^2+{2 \over 3}\left(d^2-{1 \over 2}d+1\right)~=~d^2-{1 \over 3}d+{2 \over 3}~.
	$$
	Overall, regardless of $\mathcal{G}$, we therefore query at most $d^2-{1 \over 3}d+{2 \over 3}$ edges in expectation.
\end{proof}

\begin{theorem} \label{constanddegree_re}
	There is a randomized algorithm that solves Subtree Isomorphism on two rooted trees of size $O(n)$, constant degree $d$, and height at most $h$ in expected time $O\left(\left(d^2-\frac{1}{3} d+{2 \over 3}\right)^h \right)$.
	In particular, the algorithm is strongly subquadratic for trees of height
	$$
		h~\leq~ \left(\frac{2 \log d}{\log(d^2-\frac{1}{3} d+{2 \over 3})}-\epsilon\right) \cdot \log_d{n}~,
	$$
	for any constant $\epsilon>0$.
\end{theorem}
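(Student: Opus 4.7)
The plan is to combine the Edmonds--Matula recursive framework with the randomized query protocol from Lemma~\ref{query}. Given rooted trees $H$ and $G$ of degree $d$ and height at most $h$, at the root we form the $d\times d$ bipartite graph $\mathcal{G}$ whose edge $(u_i,v_j)$ records whether the $i$-th subtree of $H$ is contained in the $j$-th subtree of $G$. Each edge query is resolved by a recursive Subtree Isomorphism call on trees of height at most $h-1$, and we decide whether $H\subseteq G$ by running the algorithm of Lemma~\ref{query} on $\mathcal{G}$, which determines the existence of a perfect matching using at most $d^2-\tfrac{1}{3}d+\tfrac{2}{3}$ edge queries in expectation. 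The important feature of Lemma~\ref{query} is that this expected query bound is uniform across both the ``matching exists'' and the ``matching does not exist'' cases.

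Let $T(h)$ be the maximum expected running time of this recursive algorithm on a pair of trees of height at most $h$. Since outside the recursive calls each invocation spends only $O(d^2)=O(1)$ time handling the constant-size matching instance, the uniform query bound from Lemma~\ref{query} yields the one-dimensional recurrence
$$
T(h) ~\leq~ \left(d^2-\tfrac{1}{3}d+\tfrac{2}{3}\right)\cdot T(h-1)+O(1),\qquad T(0)=O(1),
$$
which unfolds to $T(h)=O\!\left(\left(d^2-\tfrac{1}{3}d+\tfrac{2}{3}\right)^h\right)$, establishing the first part of the theorem. In contrast to the $d=2$ and $d=3$ cases of Sections~\ref{sec:binary_upper} and~\ref{sec:any_upper} we do not need separate $T_{yes}$ and $T_{no}$ recurrences precisely because Lemma~\ref{query} already absorbs the yes/no asymmetry into a single bound; this is what makes the argument so clean.

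For the subquadratic regime, set $C_d=d^2-\tfrac{1}{3}d+\tfrac{2}{3}$. We want $C_d^{h}\leq n^{2-\epsilon''}$ for some $\epsilon''>0$, equivalently $h\leq \frac{(2-\epsilon'')\log n}{\log C_d}$. Rewriting $\log n=(\log d)\cdot \log_d n$ gives
$$
h ~\leq~ \left(\frac{2\log d}{\log C_d}-\epsilon\right)\log_d n
$$
for an appropriate $\epsilon>0$, which is the stated height threshold. The only real hurdle in this proof is obtaining Lemma~\ref{query} itself, for which both the ``yes'' side (stopping early once enough matched edges are revealed) and the ``no'' side (exploiting Hall's condition on a small witness set after a random swap and permutation) need to save $\Omega(d)$ queries in expectation; once those two savings are combined by the $1/3$--$2/3$ mixing in the proof of Lemma~\ref{query}, the rest of the present theorem is just solving a one-line recurrence.
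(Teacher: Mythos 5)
Your proof is correct and follows the same approach as the paper: recursive application of Lemma~\ref{query} at each level, giving an expected $d^2 - \tfrac{1}{3}d + \tfrac{2}{3}$ recursive calls per node, and hence the one-variable recurrence $T(h) \leq (d^2-\tfrac13 d+\tfrac23)T(h-1)+O(1)$ that unfolds to the stated bound. Your observation that the uniformity of the Lemma~\ref{query} query bound across the yes/no cases is precisely what permits a single recurrence $T(h)$ (rather than the coupled $T_{yes},T_{no}$ system used in the $d=2$ and $d=3$ analyses) is a nice clarification that the paper leaves implicit.
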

\begin{proof}
	We run the following randomized, recursive algorithm that decides whether $H$ is isomorphic to a subtree of $G$.
\begin{enumerate}
\item Let $H_1,\ldots,H_d$ be the $d$ subtrees of $H$, and let $G_1,\ldots,G_d$ be the $d$ subtrees of $G$;
\item Let $\mathcal{G}$ be a bipartite graph with $d$ vertices $\mathcal{U} = \{u_1,\ldots,u_d\}$ on the left and $d$ vertices $\mathcal{W} = \{v_1,\ldots,v_d\}$ on the right. For all $i,j \in [d]$, let $u_i$ and $v_j$ be connected if and only if $H_i$ is isomorphic to a subtree of $G_j$.
\item Decide whether the graph $\mathcal{G}$ has a perfect matching by running the algorithm from Lemma \ref{query}. Whenever we need to decide whether an edge $(u_i,v_j)$ is present in $\mathcal{G}$, do it recursively.
\end{enumerate}
By the proof of Lemma \ref{query}, it suffices to query $d^2-\frac{1}{3} d+{2 \over 3}$ edges for every level. Given that the height of the trees is upper bounded by $h$, we get the desired running time.
\end{proof}

\medskip
\paragraph{Acknowledgements.}
We would like to thank Shiri Chechik, Piotr Indyk, Haim Kaplan, Michael Kapralov, Huacheng Yu, and Uri Zwick for many helpful discussions.  
%A.A. and V.V.W. were supported by NSF Grants CCF-1417238 and CCF-1514339, and BSF Grant BSF:2012338. A.B. was supported by the NSF and the Simons Foundation. Part of the work was done while the author was at the Thomas J. Watson Research Center.

\bibliographystyle{abbrv}
\bibliography{ref}
%\bibliography{sigproc}

\appendix

\section{Analysis of the Edmonds-Matula algorithm and its variants}\label{UpperBoundAppendix}
\begin{lemma}
On binary trees, the Edmonds-Matula algorithm takes $O(mn)$ time, where $m=|H|,\;n=|G|$.
\end{lemma}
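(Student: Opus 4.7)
The plan is to present the Edmonds--Matula algorithm as a dynamic program over pairs of nodes and bound the total work by counting subproblems and per-subproblem cost.

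For every pair $(h,g)$ with $h\in V(H)$ and $g\in V(G)$, I would define a boolean entry $M[h,g]$ equal to \textbf{true} iff the subtree of $H$ rooted at $h$ embeds into the subtree of $G$ rooted at $g$ with $h$ mapped to $g$. Writing $h_L, h_R$ and $g_L, g_R$ for the left and right children (treating any missing child as a special ``empty'' subtree that is vacuously contained in every tree), the recursive characterization given in Section~\ref{sec:UB} becomes
\[
M[h,g]~=~\bigl(M[h_L,g_L]\wedge M[h_R,g_R]\bigr)\,\vee\,\bigl(M[h_L,g_R]\wedge M[h_R,g_L]\bigr),
\]
which is exactly the perfect-matching condition on the auxiliary bipartite graph $\mathcal{G}$ associated with $(h,g)$ specialized to the binary case.

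I would then fill the table $M$ via memoized recursion (equivalently, bottom-up in order of increasing subtree height). There are at most $mn$ entries, and because each auxiliary bipartite graph has at most two vertices on each side, computing a single entry from the four already-computed child-pair entries takes $O(1)$ time. Summing over all entries yields total work $O(mn)$. The original query--whether $H$ embeds into $G$ under the root-to-root convention used throughout the section--is then read off as $M[\mathrm{root}(H),\mathrm{root}(G)]$; and if one instead wants the unrestricted rooted-subtree-isomorphism answer (where the image of $\mathrm{root}(H)$ may be any node of $G$), a final scan over $g\in V(G)$ checking $M[\mathrm{root}(H),g]$ adds only an $O(n)$ term.

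The one bookkeeping point to verify is that uniformly treating missing children via an ``empty'' sentinel preserves the correctness of the matching recurrence: the perfect-matching condition on $\mathcal{G}$ only requires saturating the $h$-side, and unused children of $g$ may be left unmatched, so empty $H$-children pair trivially with anything. I do not anticipate any genuine obstacle beyond stating this convention carefully and confirming that the four table lookups used at each node really are already available when the DP reaches $(h,g)$, which is immediate from a postorder traversal of $H \times G$.
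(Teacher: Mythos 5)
Your proof is correct and gives the same bound, but via a somewhat different route than the paper. You reformulate the Edmonds--Matula procedure as a memoized dynamic program over node pairs $(h,g)$ and bound the work by counting table entries (at most $mn$) times $O(1)$ per entry. The paper instead analyzes the plain recursion directly: it writes the time recurrence $T(m,n) = 1 + T(m_L,n_L) + T(m_R,n_R) + T(m_L,n_R) + T(m_R,n_L)$ with base cases $T(0,n)=T(m,0)=1$, and proves $T(m,n)\le mn$ by induction, using the identity $(m_L+m_R)(n_L+n_R) = m_L n_L + m_R n_R + m_L n_R + m_R n_L$ together with $m_L+m_R = m-1$ and $n_L+n_R = n-1$. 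The two analyses agree because in this recursion no node pair is ever revisited --- a pair $(u,v)$ is only reachable from the unique pair $(\mathrm{parent}(u),\mathrm{parent}(v))$ --- so memoization is a no-op and the recursion tree has exactly one node per encountered pair. Your state-counting view makes the $O(mn)$ bound transparent and carries over verbatim to the labelled and largest-common-subtree variants; the paper's recurrence induction is marginally more self-contained (it does not need the no-revisit observation stated explicitly) and is the form the appendix reuses for the general constant-degree case.
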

\begin{proof}
Denote by $m_L,m_R,n_L,n_R$ the sizes of $H_L,H_R,G_L,G_R$, the left and right subtrees of $H$ and $G$, notice that $m_L+m_R=m-1,\;n_L+n_R=n-1$.
The runtime of the algorithm is described by the recurrence
\begin{align*}
T(0,n)&~=~T(m,0)~=~1 ~,\\
T(m,n) &~=~ 1 + T(m_L,n_L) + T(m_R,n_R) +T(m_L,n_R) + T(m_R,n_L)~.
\end{align*}
Then, by induction, we prove $T(m,n)\leq mn$,
\begin{align*}
T(m,n) &= 1 + T(m_L,n_L) + T(m_R,n_R) + \\
&\hspace{3cm}T(m_L,n_R) + T(m_R,n_L)\\
&\leq 1 + m_L\cdot n_L + m_R\cdot n_R + m_L\cdot n_R + m_R\cdot n_L\\
&= 1+(m_L+m_R)\cdot (n_L+n_R) \\
&= 1+(m-1)(n-1) \\
&\leq mn~.
\end{align*}
\end{proof}

As mentioned in section~\ref{sec:UB}, this algorithm is easily extended to solve the labelled version of the problem or the \emph{Largest Common Subtree} problem for any constant bounded degree $d=O(1)$. For completeness, we include pseudo-code of a variant that solves the \emph{Labelled Largest Common Subtree} problem, generalizing both.

\begin{algorithm}
\caption{$LLCS(H,G,d)$}
\label{MCS}
%\small
%\algsetup{indent=1em}
\begin{algorithmic}
\IF{$Size(F)=0$ \OR $Size(G)=0$}
\RETURN{0}
\ENDIF
\FOR{$i=1$ to $d$}
\FOR{$j=1$ to $d$}
\IF {$Label(H.Children[i])=Label(G.Children[j])$}
\STATE $Sub[i,j] \leftarrow LLCS(Subtree(H.Children[i]),Subtree(G.Children[j]),\;d)$
\ELSE
\STATE $Sub[i,j] \leftarrow 0$
\ENDIF
\ENDFOR
\ENDFOR
\STATE $w \leftarrow $ the weight of a maximum weight bipartite 
matching in the bipartite graph with 
\STATE edges defined by $Sub[i,j]$.
\RETURN {$w+1$}
\end{algorithmic}
%\end{algorithm}
\end{algorithm}

\begin{lemma}
Algorithm~\ref{MCS} solves the \emph{Labelled Largest Common Subtree} problem in time $O(mn)$ for rooted trees $H$,$G$ of bounded degree $d=O(1)$, where $m,n$ are the sizes of $H,G$ respectively.
\end{lemma}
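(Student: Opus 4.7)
The plan is to mirror the recurrence analysis given earlier for the binary version, now generalized to constant degree $d$. Let $T(m,n)$ denote the worst-case running time of $LLCS(H,G,d)$ on inputs with $|H|=m$ and $|G|=n$. At the top level the algorithm does $O(d^2)$ label comparisons and solves a $d\times d$ maximum-weight bipartite matching; since $d = O(1)$ both steps cost $O(1)$. Letting $m_1,\dots,m_d$ be the sizes of the subtrees rooted at the children of $H$ and $n_1,\dots,n_d$ the corresponding sizes for $G$ (with $\sum_i m_i = m-1$ and $\sum_j n_j = n-1$, possibly padding with $0$ for missing children), the recurrence becomes
\[
T(m,n) \;\leq\; c_0 + \sum_{i=1}^{d}\sum_{j=1}^{d} T(m_i, n_j),
\]
with $T(0,n), T(m,0) \leq c_0$ for some constant $c_0$ depending only on $d$.

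Next I would prove by induction on $m+n$ that $T(m,n) \leq C(m+1)(n+1)$ for a sufficiently large constant $C$ depending on $c_0$ and $d$. The base cases $m=0$ or $n=0$ are immediate from $T(0,n),T(m,0)\leq c_0 \leq C(n+1)$ (respectively $\leq C(m+1)$) for $C \geq c_0$. For the inductive step I would substitute the hypothesis and use the key identity
\[
\sum_{i=1}^d\sum_{j=1}^d (m_i+1)(n_j+1) \;=\; \Bigl(\sum_i m_i + d\Bigr)\Bigl(\sum_j n_j + d\Bigr) \;=\; (m-1+d)(n-1+d),
\]
so that $T(m,n)\leq c_0 + C(m-1+d)(n-1+d)$. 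A short algebraic check shows that for $C$ large enough (depending on $c_0$ and $d$), this is bounded by $C(m+1)(n+1)$, completing the induction and giving $T(m,n) = O(mn)$ as desired.

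The only subtlety, and the main obstacle, is the same one that appears implicitly in the preceding binary-tree lemma: a naive bound of the form $T(m,n)\leq mn$ breaks at the base cases $m=0$ or $n=0$, where the algorithm still performs $\Theta(1)$ work. Inducting on the shifted quantity $(m+1)(n+1)$ (or equivalently tracking an additive constant per recursive call) resolves this cleanly. Correctness of the output value itself is not really in question: the recursion computes, for each pair $(i,j)$, the largest common labelled subtree rooted at $H.\text{Children}[i]$ and $G.\text{Children}[j]$, and a maximum-weight bipartite matching on these values yields the maximum contribution from the children, to which we add $1$ for the current pair of matched (labelled) roots; this is exactly the standard Edmonds--Matula recursion extended to weights and labels.
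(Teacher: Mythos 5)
There is a genuine gap in your inductive step. After padding with zero-size subtrees so that both roots have exactly $d$ children, your key identity is
\[
\sum_{i=1}^d\sum_{j=1}^d (m_i+1)(n_j+1) \;=\; \Bigl(\sum_i m_i + d\Bigr)\Bigl(\sum_j n_j + d\Bigr) \;=\; (m-1+d)(n-1+d).
\]
You then assert that $c_0 + C(m-1+d)(n-1+d) \leq C(m+1)(n+1)$ for a large enough constant $C$. But expanding,
\[
(m-1+d)(n-1+d) - (m+1)(n+1) \;=\; (d-2)\bigl(m+n+d\bigr),
\]
which is nonnegative for every $d\ge 2$ and \emph{strictly positive and unbounded in $m,n$} for $d>2$. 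So the desired inequality would require $c_0 + C(d-2)(m+n+d) \le 0$, which is false for all positive $c_0$. The ``short algebraic check'' you defer to does not go through, and no choice of $C$ repairs it. The source of the problem is precisely the padding: each phantom child contributes a factor $+1$ to the product, and those extra factors compound multiplicatively across the $d\times d$ grid.

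The paper's proof sidesteps this entirely by summing only over the $r\le d$ and $s\le d$ \emph{actual} (non-empty) children, so the sum telescopes to $(m-1)(n-1)$, and then $1 + (m-1)(n-1) \le mn$ holds for all $m,n\ge 1$. The ``subtlety'' you flag --- that $T(0,n)=T(m,0)>0$ violates $T\le mn$ --- is a red herring in that formulation: since every $m_i,n_j$ in the restricted sum is at least $1$, the inductive hypothesis is only ever applied at arguments where it holds, and the $O(1)$ cost of the trivial recursive calls on empty subtrees is absorbed into the additive constant (which is fine because $d$ is constant). If you do insist on padding with zeros, a form that works is $T(m,n)\le Cmn + C'$ with, e.g., $C'\ge c_0$ and $C\ge c_0 + (d^2-1)C'$; the shifted product form $(m+1)(n+1)$ does not. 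Your correctness discussion at the end is fine and matches what the paper leaves implicit.
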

\begin{proof}
Correctness is straightforward, it is also clear that as $d=O(1)$, Algorithm~\ref{MCS} makes a constant number of operations excluding the recursive calls. Denote by $m_1,m_2,...,m_r$ the sizes of the (maximal) subtrees rooted at the $r\leq d$ children of the root of $H$, and by $n_1,n_2,...,n_s$ the sizes of those rooted at the $s\leq d$ children of the root of $G$. It holds that $\sum_{i=1}^{r} m_i = m-1$ and $\sum_{j=1}^{s} n_j = n-1$.
The runtime of the algorithm is described by the recurrence
\begin{align*}
T(0,n)&~=~T(m,0)~=~1~,\\
T(m,n) &~=~ 1 + \sum_{i=1, j=1}^{r,s} T(m_i,n_j)~.
\end{align*}
Then, by induction, we prove $T(m,n)\leq mn$,
\begin{align*}
T(m,n) &~=~ 1 + \sum_{i=1, j=1}^{r,s} T(m_i,n_j)\\
&~\leq~ 1 + \sum_{i=1, j=1}^{r,s} m_i\cdot n_j\\
&~=~ 1+(\sum_{i=1}^{r} m_i)\cdot (\sum_{j=1}^{s} n_j) \\
&~=~ 1+(m-1)(n-1) \\
&~\leq~ mn~.
\end{align*}

\end{proof}

\end{document}